\newtheorem{theorem}{Theorem}
\newtheorem{corollary}[theorem]{Corollary}
\newtheorem{definition}[theorem]{Definition}
\newtheorem{lemma}[theorem]{Lemma}
\newlength{\blank}
\newenvironment{proof}[1][{\hspace{-\blank}}]{{\noindent\textbf{Proof~{#1}.\ }}}{\hfill\qed}
\newenvironment{proofthm}[1]{{\noindent\textbf{Proof~{#1}\ }}}{\hfill\qed}
\newcommand{\ket}[1]{|#1\rangle}
\newcommand{\bra}[1]{\langle#1|}
\newcommand{\braket}[2]{\langle #1 | #2 \rangle}
\mathchardef\ordinarycolon\mathcode`\:
\def\vcentcolon{\mathrel{\mathop\ordinarycolon}}
\def\bea{\begin{array}}
\def\eea{\end{array}}
\newcommand{\nc}{\newcommand}
\nc{\rnc}{\renewcommand}
\nc{\be}{\begin{equation}}
\nc{\ee}{\end{equation}}
\nc{\ben}{\begin{eqnarray}}
\nc{\een}{\end{eqnarray}}
\nc{\lbar}[1]{\overline{#1}}
\nc{\ketbra}[2]{|#1\rangle\!\langle#2|}
\nc{\proj}[1]{| #1\rangle\!\langle #1 |}
\nc{\avg}[1]{\langle#1\rangle}
\nc{\Rank}{\operatorname{Rank}}
\nc{\smfrac}[2]{\mbox{$\frac{#1}{#2}$}}
\nc{\tr}{\operatorname{Tr}}
\nc{\ox}{\otimes}
\nc{\dg}{\dagger}
\nc{\dn}{\downarrow}
\nc{\cA}{\mathcal{A}}
\nc{\cB}{\mathcal{B}}
\nc{\cC}{\mathcal{C}}
\nc{\cD}{\mathcal{D}}
\nc{\cE}{\mathcal{E}}
\nc{\cF}{\mathcal{F}}
\nc{\cG}{\mathcal{G}}
\nc{\cH}{\mathcal{H}}
\nc{\cI}{\mathcal{I}}
\nc{\cJ}{\mathcal{J}}
\nc{\cK}{\mathcal{K}}
\nc{\cL}{\mathcal{L}}
\nc{\cM}{\mathcal{M}}
\nc{\cN}{\mathcal{N}}
\nc{\cO}{\mathcal{O}}
\nc{\cP}{\mathcal{P}}
\nc{\cR}{\mathcal{R}}
\nc{\cS}{\mathcal{S}}
\nc{\cT}{\mathcal{T}}
\nc{\cX}{\mathcal{X}}
\nc{\cZ}{\mathcal{Z}}
\nc{\csupp}{{\operatorname{csupp}}}
\nc{\qsupp}{{\operatorname{qsupp}}}
\nc{\var}{\operatorname{var}}
\nc{\rar}{\rightarrow}
\nc{\lrar}{\longrightarrow}
\nc{\polylog}{\operatorname{polylog}}
\nc{\id}{{\operatorname{id}}}
\nc{\RR}{{{\mathbb R}}}
\nc{\CC}{{{\mathbb C}}}
\nc{\FF}{{{\mathbb F}}}
\nc{\NN}{{{\mathbb N}}}
\nc{\ZZ}{{{\mathbb Z}}}
\nc{\PP}{{{\mathbb P}}}
\nc{\QQ}{{{\mathbb Q}}}
\nc{\UU}{{{\mathbb U}}}
\nc{\EE}{{{\mathbb E}}}
\nc{\qed}{{\hfill$\Box$}}
\def\>{\rangle}
\def\<{\langle}
\begin{document}

\title{Distributed private randomness distillation}

\author{Dong Yang}
\email{dyang@cjlu.edu.cn}
\affiliation{Laboratory for Quantum Information, China Jiliang University, 310018 Hangzhou, China}
\affiliation{Department of Informatics, University of Bergen, 5020 Bergen, Norway}

\author{Karol Horodecki}
\email{khorodec@inf.ug.edu.pl}
\affiliation{International Centre for Theory of Quantum Technologies, University of
Gda\'{n}sk, Wita Stwosza 63, 80-308 Gda\'{n}sk, Poland}
\affiliation{Institute of Informatics, Department of Physics, Mathematics and Informatics, National Quantum Information Centre, University of Gda\'{n}sk, 80-308 Gda\'{n}sk, Poland}

\author{Andreas Winter}
\email{andreas.winter@uab.cat}
\affiliation{ICREA \& 
F\'{\i}sica Te\`{o}rica: Informaci\'{o} i Fen\`{o}mens Qu\`{a}ntics, 
Departament de F\'{\i}sica, Universitat Aut\`{o}noma de Barcelona, 
08193 Bellaterra (Barcelona), Spain}

\date{30 August 2019}

\begin{abstract}
We develop the resource theory of private randomness extraction in 
the distributed and device-dependent scenario. 
We begin by introducing the notion of independent random bits, which 
are bipartite states containing ideal private randomness for each party, 
and motivate the natural set of free operations. As a conceptual tool, we introduce \emph{Virtual Quantum State Merging}, 
which is essentially the flip side of Quantum State Merging, without
communication. 
We focus on the bipartite case and find the rate regions achievable in 
different settings. Surprisingly, it turns out that local noise 
can boost randomness extraction. 
As a consequence of our analysis, we resolve a long-standing problem by giving 
an operational interpretation for the reverse coherent information (up to a constant 
term $\log d$) as the number of private random bits obtained by sending quantum 
states from one honest party (server) to another one (client) via the eavesdropped 
quantum channel. 
\end{abstract}

\maketitle

{\it Introduction.}--%
Randomness is an important notion, having various applications 
in science and technology. Usually only pseudo-randomness is produced 
in the classical world, for instance by certain complex algorithms in a 
computer, where the pseudo-random value is determined by a hidden variable 
so that it is already implicitly known beforehand. 
Conceptually, the most straightforward way to ensure
a uniformly random bit sequence is to generate it by measuring a quantum
state, e.g., the $\sigma_Z$ eigenstate $\ket{0}$ in the $\sigma_X$
eigenbasis $\ket{\pm}$. In this way the measurement outcome is completely 
unpredictable and, thus, private against any eavesdropper. Here the privacy comes from the fact that a pure state naturally excludes any correlation with other systems. 
The problem of randomness extraction from a general mixed state  
has been considered in \cite{BFW} by the decoupling approach \cite{merging,decoupling,decouple}, where Alice's system is correlated with the system of an eavesdropper Eve via a mixed state $\rho_{AE}$ and the goal of Alice is to generate randomness private against Eve. An implicit assumption in this setting is that Bob, who holds 
the purifying system of $\rho_{AE}$, is a trusted but otherwise completely 
passive party. The reason is that Alice herself cannot figure out the correlation with Eve without Bob's assistance. So in the spirit of being cautious in cryptography, we have to assume that Eve holds all the purifying system of $\rho_A$, i.e. $\rho_{AE}$ is pure, unless Alice knows that Bob holds part of it. 

We make Bob active, where Alice and Bob trust each other and collaborate to  extract independent randomness private against Eve. This is a novel scenario: distributed private randomness extraction, which is dual to the Slepian-Wolf problem of distributed data compression \cite{SW} in information theory. Surprisingly, a natural dual setting to Slepian-Wolf does not exist in the classical framework, yet quantumly it does.
In this Letter, we study the distributed and device-dependent
scenario for randomness extraction (for an alternative approach, the so-called device-independent
scenario, see \cite{Bera et al} and references therein). 
We begin with defining the notion of
independent random bits (\emph{ibits}), in a picture dual to the standard one, as 
bipartite states that contain ideal private randomness, and justifying 
the set of allowed operations which do not increase randomness. 
Then we introduce our conceptual tool, the 
Virtual Quantum State Merging (VQSM) protocol, to study two-sided and one-sided 
randomness extraction. VQSM originates from the 
Quantum State Merging (QSM) protocol \cite{merging} and represents the other face of 
QSM, less noticed in the literature. In the two-sided setting, we obtain the achievable 
rate regions in various scenarios, including either free or no communication
and either free or no local noise. It follows that there is no bound randomness.
Surprisingly, local noise, usually regarded as useless,
can extend the rate region. In the one-sided setting, we determine the optimal rates 
of randomness extraction in two extremal classes, pure entangled 
states and separable states, and provide a computable 
upper bound for general states.  
Finally, we resolve a long-standing problem by giving an operational 
interpretation for the reverse coherent information in terms of the number of ibits obtained by sending quantum 
state from one honest party to another one via an 
eavesdropped quantum channel. In the following, we state and discuss the 
results carefully, while all proofs are in the appendix \cite{SM}.

\medskip
{\it Ibits and CLODCC.}--%
In the standard approach, a state having one ideal random bit with respect to Eve has the form 
$\rho_{K_AE}=\frac{1}{2}(\ketbra{0}{0}\!+\!\ketbra{1}{1})_{K_A}\otimes \rho_E$ where $\rho_E$ is an arbitrary state of Eve. 
In the dual picture, where the purifying system is included and Eve's system is excluded, 
an equivalent form is 
$\alpha_{K_AA'B'} = {1\over 2}\sum_{i,j=0}^{1} \ketbra{i}{j}_{K_A}\otimes U_i \sigma_{A'B'} U_j^{\dagger}$, 
where $K_A$ is the key part that generates randomness if a measurement were 
performed in the basis $\{\ket{0},\ket{1}\}_{K_A}$, $U_i$ are unitary operators on $A'B'$,
and $\sigma_{A'B'}$ is an arbitrary state. $A'B'$ is known as the \emph{shield system} 
possibly distributed over Alice's and Bob's spaces, 
protecting privacy against Eve \cite{pptkey}. Similarly, a state with two independent random bits at Alice's and Bob's 
side, respectively, called \emph{ibit}, has the following form:
{\lemma \label{ibit-form} 
A bipartite quantum state having two independent random bits private against Eve is of the form
\be\label{alpha-state}
\alpha_{K_AK_BA'B'} = \frac14 \sum_{i,j,k,\ell=0}^{1} 
                              \ketbra{i}{j}_{K_A}\otimes \ketbra{k}{\ell}_{K_B} 
                               \otimes U_{ik} \sigma_{A'B'} U_{j\ell}^{\dagger}.
\ee
}

It is important to note that getting (approximate) ibits is equivalent to obtaining the state in the exact (approximate) standard form:
\be
\rho_{K_AK_BE}=\frac{1}{4}(\ketbra{0}{0}\!+\!\ketbra{1}{1})_{K_A}\otimes (\ketbra{0}{0}\!+\!\ketbra{1}{1})_{K_B}\otimes\rho_E,
\ee
see the proof of Lemma \ref{ibit-form} in \cite{SM}. The security of randomness is measured by the trace distance \cite{Christandl-thesis} and, thus, is composable \cite{Ben,RK}. 

Now we ask what kind of operations are allowed for free in randomness extraction. Notice that a pure state $\ket{0}_{K_A}$ is a special form of ibit where $K_B$ and $A'B'$ are dimension one, so we cannot allow pure states for free. It is safe to assume a closed system paradigm, like the framework for distilling thermodynamical work represented by pure states \cite{OHHH2001}. Also it is natural to assume that free operations should allow for local unitary transformation and some form of communication. A good candidate is the set of operations formed by \emph{Closed Local Operations} (CLO) and \emph{Dephasing Channel Communication} (DCC) \cite{OHHH2001,huge-delta}, designed for quantifying the localizable purity in a quantum state, where DCC simulates classical communication. In our setting, partial trace is not needed since we can always put the partial-tracing subsystem into the shield without harming privacy. In total, the set of free operations consists of the following two and their compositions:
(i) local unitary transformations and
(ii) sending a subsystem through a dephasing channel, where the dephasing channel environment goes to Eve. W.l.o.g.~the dephasing basis is chosen to be the fixed computational one and these operations are named \emph{CLODCC}. By Lemma \ref{ibit-form}, the picture of distilling ibits without measurement is equivalent to that of getting randomness in the standard form after performing measurement on the key part. We can therefore exchange both pictures freely. 

It might seem that our framework for randomness reduces to that for 
purity since we have similar operations and purity can generate randomness. 
This would be true if we considered randomness extraction under global operations, 
but it is very different in the distributed setting on which we focus in this work. 
Also, it is possible to study randomness extraction by assuming other 
free operations, e.g., incoherent operations \cite{HZ}.  

Having defined the free operations, we now ask the key question: For a bipartite quantum state $\rho_{AB}$ whose purification is with Eve, how much private randomness can Alice and Bob obtain against Eve under CLODCC? We mainly consider the asymptotic i.i.d. setting which means we count the rates. It turns out that another seemingly useless resource--local noise, can act as a booster in the process. Here local noise means a maximally mixed state on Alice's or Bob's side, whose purification is under Eve's control, i.e. Alice or Bob share a maximally entangled state with Eve. It is clear that from local noise alone, Alice cannot produce randomness unknown to Eve by measuring her half, because the outcome is perfectly correlated with Eve. However local noise may help when combined with other states. An illuminating example is entanglement swapping: Alice shares one singlet with Bob and another one with Eve, i.e. the tripartite state is $\ket{\Phi}_{A_1B}\otimes\ket{\Phi}_{A_2E}$. Here the state $\frac{1}{2}\1_{A_2}$ is understood as the local noise on Alice's side. Observe that in entanglement swapping, the outcome of the Bell measurement by Alice is completely random against each of Bob and Eve separately (no communication between them). Thus we get that in the case of no communication between Alice and Bob, Alice can obtain two random bits unknown to Eve. However without the local noise, 
Alice can get only one random bit. We therefore have several different settings 
depending on whether randomness is to be distilled on two sides or one side, 
whether local noise is available or not, and whether communication is allowed or not. 
Before stating our findings on the rate regions,
we introduce the conceptual tool.

\medskip
{\it VQSM.}--Entanglement swapping shows that local noise can play an important role 
in the distributed scenario. If local noise is not freely available, Alice and Bob have the option of creating it from the resource. 
In the case of no communication, Alice can put one copy of her systems $A$ into 
the shield $A'$, meaning that it will not be touched by her any more. This does not change Bob's state. We may pretend that $A'$ is with Eve, making things only worse for Bob, but now it is a pure state with Eve, the definition of local noise. In the case of communication, Alice can send one of her systems to Bob through the dephasing channel. Since randomness in the copy is not extracted, these options may reduce the overall rate of randomness distillation if the proportion of the wasted copies is not negligible. 
However, there is a more efficient way, which produces some private
randomness for Alice and simultaneously gives Bob local noise. To gain 
intuition, let us look at the QSM protocol \cite{merging}.
For our goal, B is the reference system and QSM is performed from A to E, 
i.e., the task is to transform $n$ copies of a tripartite state $\ket{\psi}_{ABE}$ into 
$\ket{\phi}_{A_1A_2B^nE'E_1E_2}$ by LOCC such that 
$\phi_{E_1B^nE'}\approx(\ketbra{\psi}{\psi}_{ABE})^{\otimes n}$, 
where $\phi_{E_1B^nE'}=\tr_{A_1A_2E_2}\phi_{A_1A_2B^nE'E_1E_2}$. 
In the asymptotic i.i.d. case, if 
$S(A|E)_{\psi} = S(\psi_{AE})-S(\psi_E)>0$, the protocol requires 
an additional rate of $S(A|E)_{\psi}$ ebits (the unit entanglement in a two-qubit maximally entangled state) shared between systems A and E. 
However, when $S(A|E)_{\psi}<0$, not only does it not need entanglement, 
but also creates a rate of $-S(A|E)_{\psi}$ ebits on $A_2E_2$.
In the protocol, Alice transforms $A^n$ into $A_1A_2$ by a unitary, 
then measures subsystem $A_1$ in the computational basis and announces 
the outcome to Eve, who further, according to the outcomes, transforms $E^n$ into $E'E_1E_2$ 
by proper unitaries. The amount of classical communication
is the size of $A_1$, its rate is the mutual information between Alice and Bob,
$I(A:B)_\psi = S(\psi_A)+S(\psi_B)-S(\psi_{AB})$.

Roughly speaking, the measurement outcome on $A_1$ in QSM is independent of Bob and private against Eve. The intuition is the following: Bob's state remains invariant, so the measurement outcome is independent of Bob; the necessity to send the measurement outcome from Alice to Eve implies that Eve cannot predict it by herself and, thus, it is private against Eve. Technically, the measurement outcome is almost decoupled from Eve. This weak correlation can be deleted by the technique of 
privacy amplification (PA) \cite{RK}; 
it amounts to tracing out a little bit more from $A_1$, not affecting the rate of randomness. 
A further observation is that Bob's system is purely entangled with 
Eve's, conditional on the measurement outcomes, and in our setting, Bob need not care about 
whether Eve performs the rotations or not. In this way, Alice can extract randomness independent of Bob and 
private against Eve at the rate $I(A:B)_{\psi}$, while at the same time 
Bob's system is virtually entangled with Eve, acting as
noise that can help Bob extract randomness later. The composition of QSM and PA we call \emph{Virtual Quantum State Merging}. 
The formal claim that the rate of randomness extractable by Alice in the QSM setting equals to $I(A:B)_{\psi}$ is encapsulated in a double-decoupling theorem, \cite[Theorem \ref{two-decouple}]{SM}. Its composability, i.e. Alice's and Bob's randomness is independent when Bob extracts randomness later assisted by local noise, is proved in \cite[Theorem \ref{security}]{SM}. Theorem \ref{two-decouple}, which may be of independent interest itself, is our main tool to derive randomness distillation rates.

\medskip
{\it Two-sided randomness distillation.}--%
Having developed the tool, we are ready to state the main results.  
It is clear that given a state $\rho_{AB}$, we can 
obtain the randomness at the rate $R_G(\rho_{AB}):=\log|AB|-S(\rho_{AB})$ if global 
unitary operations on $AB$ are allowed \cite[Lemma \ref{global-purity}]{SM}. 
We now ask the same question when the parties are distributed so only local 
unitary operations are allowed: What is the rate region of achievable pairs
$(R_A, R_B)$, representing that 
Alice produces randomness at rate $R_A$ and Bob at rate $R_B$, and their 
randomness is independent and secret against Eve, who has the purifying system?
We have four different settings in which we allow free or no local noise,
and free or no communication in form of the dephasing channel. 
The answer is our first main result, Theorem \ref{two-side}.

{\theorem \label{two-side} 
For a given state $\rho_{AB}$, the following rate regions are achievable
(and tight in settings 1, 2 and 3):

1) For no communication and no noise, 
$  R_A       \leq \log|A| - S(A|B)_+ $, 
$  R_B       \leq \log|B| - S(B|A)_+ $, and
$ R_A + R_B \leq R_G$,
where $[t]_+=\max\{0,t\}$;

2) for free noise but no communication, 
$R_A       \leq \log|A| - S(A|B)$, 
$R_B       \leq \log|B| - S(B|A)$, and 
$R_A + R_B \leq R_G$;

3) for free noise and free communication, 
$R_A\leq R_G$, $R_B\leq R_G$, and $R_A+R_B\leq R_G$;

4) for free communication but no noise, 
$R_A       \leq \log|AB| - \max\{S(B),S(AB)\}$, 
$R_B       \leq \log|AB| - \max\{S(A),S(AB)\}$, and
$R_A + R_B \leq R_G$.}

Note how the solutions to 1) and 2) appear to be dual to the Slepian-Wolf theorem on distributed data compression \cite{SW}. In 3), the rate $R_G$ can be realised on either side as randomness (but not necessarily as purity). We prove only achievability in 4) and leave its tightness open. 

\begin{figure}[ht]
\includegraphics[width=0.36\textwidth]{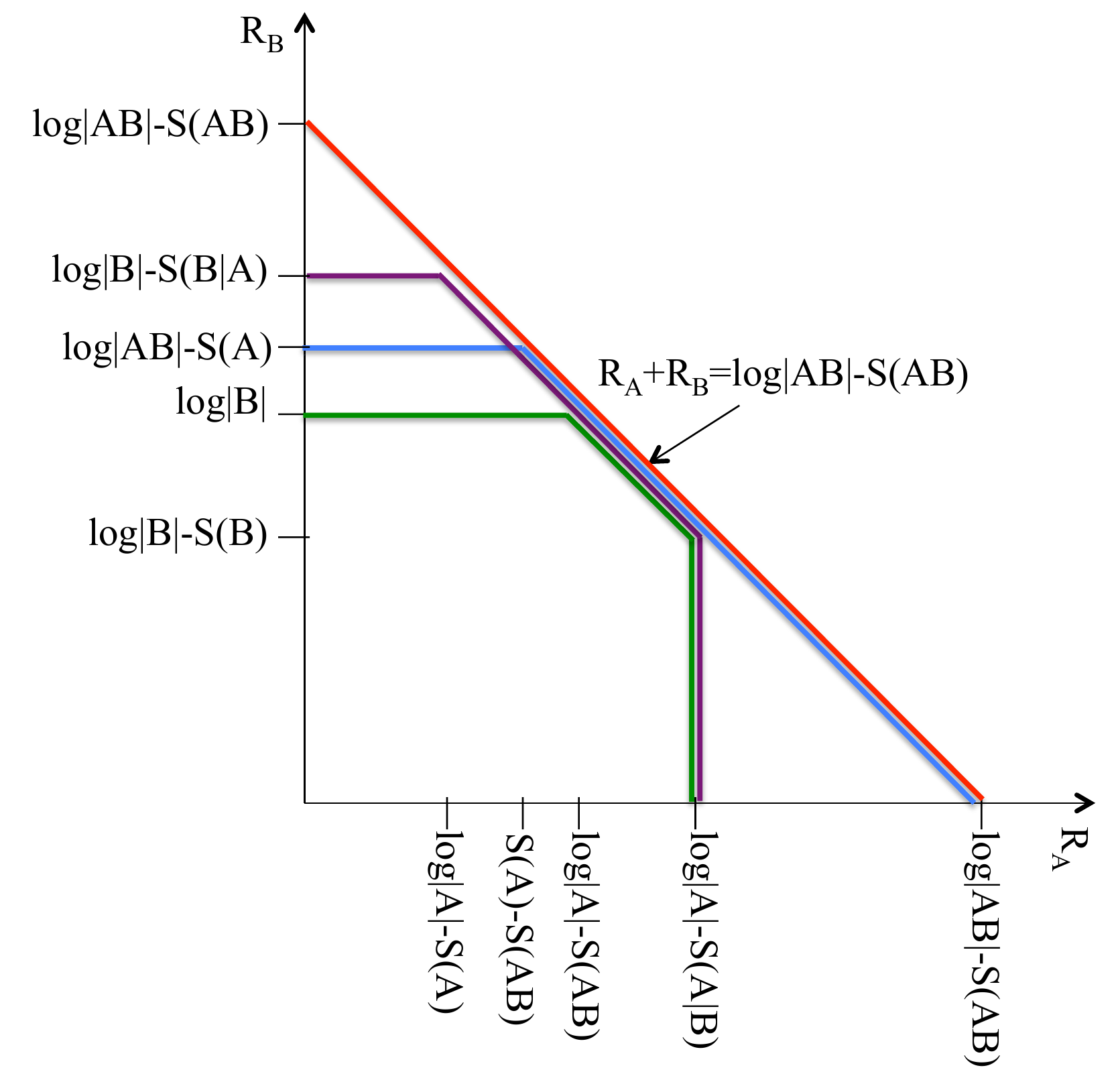}
\caption{The rate regions of $(R_A, R_B)$ when $S(A|B)>0>S(A)-\log|A|>S(B|A)$. The green lines show the rate region of setting 1), the purple lines of setting 2), the red lines of setting 3), and the blue lines of setting 4).}\label{fig1}
\end{figure}

It is easy to see that, if non-zero randomness can be extracted by global operations, which 
means $R_G>0$, then from Theorem \ref{two-side} there exists a pair $(R_A,R_B)$ satisfying $R_A+R_B=R_G>0$ achievable even in the most restrictive setting 1). Therefore, an immediate conclusion from Theorem \ref{two-side} is the non-existence of bound randomness states.

{\corollary \label{no-bound} Given a state $\rho_{AB}$, if randomness can be extracted by 
global operations on $AB$, then non-zero randomness can be extracted by CLODCC operations}.

\medskip
{\it One-sided randomness distillation.}--%
We will focus on a pair rate $(R_A,0)$ in setting 4) to understand it a bit better. The task is that Bob helps Alice to extract randomness against 
Eve as much as possible. Free communication is allowed but local noise is not. Other 
settings are understood already in Theorem \ref{two-side} from the tightness of the rate regions, which imply that the extremal points are optimal. First, we derive a formula for the optimal rate, unfortunately involving regularization.  

{\theorem \label{randomness-one}
The randomness that can be extracted from $\rho_{AB}$ on Alice's side is
$
R_A(\rho)=\log|AB|-\inf\frac1n\max\bigl\{ S(E'^{(n)}), S(B'^{(n)}) \bigr\},
$
where the infimum is taken over all $n$, and $E'^{(n)}$ and $B'^{(n)}$ are the output
systems under CLODCC acting on $\rho^{\otimes n}_{AB}$.}

At first sight, this theorem is not very useful as it involves complex CLODCC processing 
and suffers from the notorious regularization problem. 
But the further observation that 
$S(E'^{(n)})\ge nS(E)$, because Eve's entropy is non-decreasing in every use of the 
dephasing channel, gives the exact optimal rate for a large class of states. Namely,
for $\rho_{AB}$ satisfying $S(B)\le S(E)$, $R_A(\rho)=\log|AB|-S(AB)=R_G$, 
so all the randomness can be localized on Alice's side. A large class of bipartite states satisfy this property, e.g. the 
positive-partial-transpose states, which include separable states. 
Note that possibly two-way communication is needed to localize the global purity on Alice's 
side, but only one-way communication is needed to localize randomness. 
An example for this is the cq-state 
$\rho_{AB}=\sum_i p_i\ketbra{i}{i}_A\otimes \rho^i_B$, when $\rho^i_B$ do not commute.  

Whether the regularization is needed or not is an open problem. We give a partial answer 
to this in that $R_A$ is at least not strongly additive, i.e. 
$R_A(\rho\otimes \sigma)\neq R_A(\rho)+R_{A}(\sigma)$, by presenting an 
example with the activation effect: $R_A(\Phi_{AB}\otimes \sigma_{AB})=3$ while 
$R_A(\Phi_{AB})=\frac32$ and $R_A(\sigma_{AB})=1$, where $\Phi_{AB}$ is a singlet 
and $\sigma_{AB}=\frac12(\ketbra{00}{00}\!+\!\ketbra{11}{11})$. The nontrivial
$R_A(\Phi_{AB})=\frac32$ comes from Theorem \ref{uppbound-one}, and the other two are
straightforward. 

{\theorem \label{uppbound-one}
The optimal randomness that can be extracted from $\rho_{AB}$ on Alice's 
side is upper bounded by $R_A(\rho)\leq \log|AB|-\frac12\max\{S(A),S(B)\}$. The upper bound is tight for pure states.
}


Hence, for $\ket{\phi}_{AB}$, $R_A=\log|AB|-\frac12S(A)_{\phi}$. Furthermore, the rate of the secret key that can be distilled from the pure state is $K_D=S(A)_{\phi}$ \cite{pptkey}. When $|A|=|B|=d$, we have an appealing formula exhibiting 
the exact balance between localisable and shareable privacy
$R_A+\frac12K_D=2\log d$.
For mixed states $K_D(\rho_{AB})\le S(\rho_A)$, 
we get $R_A+\frac12K_D\le 2\log d$,
which can be treated as complementarity between private
randomness and key in analogy to complementarity between purity and 
entanglement \cite{JO}. 
It captures the fact that $R_A$ attains $2\log d$ on a pure product state 
while $K_D$ attains the maximal value on a maximally entangled state.

\medskip
{\it Private randomness capacity.}--%
It is well known that channel capacity formulas usually involve regularization 
\cite{Smith}, i.e., optimisation over growing numbers of channel uses due to 
non-additivity of the relevant quantities 
\cite{Hastings,DiVincenzo-Qnonadditive,Li-Winter,Smith-privacy}, making a 
head-on numerical approach impossible. However, this does not prevent the 
regularized non-additive quantities from having physical interpretations: the optimal rate to transmit information faithfully through a channel. Ironically, there are additive quantities that however
lack an interpretation \cite{uni-additivity}, one of these being the 
\emph{reverse coherent information}, which is additive 
\cite{DJKR}, but whose interpretation was missing for a long time. The quantity was defined in \cite{Hayashi} and introduced in \cite{DJKR} as the ``negative cb-entropy'' of a channel.
In \cite{reverse-coh}, it was rediscovered independently as {\it reverse coherent information} (RCI) and 
shown to be a lower bound for the entanglement 
distribution capacity of a channel assisted by classical feedback communication by using the hashing inequality \cite{DevetakWinter-hash} (see \cite{MPRHorodecki} in this context). See  \cite{continuous-1,continuous-2} for the extension to the continuous-variable case and the very recent papers on RCI \cite{SKG,GW,KW,Junge1,Junge2,GES,VA,Pirandola3,Pirandola5}. Here we provide its exact operational interpretation.  

Consider the task of generating private randomness by communicating 
through a quantum channel. A client Bob wanting to produce private randomness, has the measurement device but cannot prepare a quantum state himself. 
He has access to but does not trust quantum systems being at his disposal, which are potentially entangled with Eve's system. However, there is a quantum channel to Bob from a trusted server Alice who can prepare any quantum state, although the channel itself is eavesdropped by Eve. 
First, we argue that the model is well motivated and that it captures the server-client structure possibly realised in future 
quantum networks: the server having huge devices and being able to 
prepare and manipulate quantum states and a client being able to only perform 
limited operations, e.g., unitaries and measurement. 
The server provides a service to the client via a quantum channel. 
Second, this new cryptographic model is completely in line with the standard model of 
transmitting information and, thus, can be regarded 
as a new character of a quantum channel. Now the natural capacity question is: 
``What is the maximal rate of private randomness that can be extracted at Bob's side when 
Alice sends quantum states through the channel?'' Amazingly, we can answer
it completely.

{\theorem \label{randomness-capacity}
The private randomness capacity of a channel 
$\mathcal{N}:A'\longrightarrow B$ is given by\\
$
  R({\cal N}) = \log|B| 
               + \max_{|\phi\>_{AA'}} \bigl\{ S(\phi_A)-S(\id_A\otimes{\cal N}(\phi_{AA'})) \bigr\}.
$}

The second term is just the reverse coherent information of the channel. So, for the first time, we provide an operational interpretation for it in a
natural Shannon-theoretic model. [After our work, Mark Wilde pointed out that the same quantity $R({\cal N})$ can be derived from \cite[Theorem 15]{hypothesis} and, thus, can be explained in quantum hypothesis testing for channels.] As a matter of fact, the randomness generated at the client is private not only to the eavesdropper but also to the server. 
The single-letter formula is concave w.r.t. the input state \cite{Holevo} and, thus, efficiently computable. 
We can draw an interesting comparison between private randomness capacity and 
purity capacity: 
$\text{Pure}({\cal N})=\log|B|-\inf_n \frac1n S_{\min}({\cal N}^{\otimes n})$, where 
$S_{\min}({\cal N})=\min_{\phi}S({\cal N}(\phi))$ is the minimum output entropy. 
From \cite{Hastings}, we know that $S_{\min}({\cal N})$ is not additive; thus regularisation is required.

\medskip
{\it Summary and outlook.}--%
We have initiated the study of randomness extraction in the distributed scenario and provided the tool to tackle this situation. We found the exact achievable rate regions for various settings of distillation protocols and gave the long-sought operational interpretation
of the reverse coherent information of a quantum channel.

This work opens up a new area, suggesting a wide range of generalisations 
and extension of the considered scenarios, including the multipartite case and 
the one-shot scenario based on one-shot QSM \cite{Berta-ms},
as well as other cryptographic variants such as the honest-but-curious scenario. The structural analogy between private states and independent states also needs further exploration. These directions will be developed elsewhere \cite{elsewhere}. Two notable topics are under study. One is distributed randomness extraction under the framework of coherence theory \cite{Aberg,BCP,WY}, where the allowed operations are incoherent, and the other is the strong converse problem for the private randomness capacity. Indeed, there are rare cases \cite{ent-assist,EBC} where we have a single-letter formula on capacity, and the strong converse theorem is proved only for such cases \cite{reverse-Shannon,WWY}.

\medskip
{\it Acknowledgements.}--%
The authors thank Omer Sakarya for confirming the direct relation between secret sharing schemes and the structure of independent states, and Mark Wilde and Stefano Pirandola for comments on an earlier version.
DY was supported by the NSFC (grant nos. 11375165, 11875244), and by the NFR project ES564777.
KH was supported by the grant Sonata Bis 5 (grant no. 2015/18/E/ST2/00327) from the National Science Center, and by the Foundation for Polish Science (IRAP project, ICTQT, contract no. 2018/MAB/5, co-financed by EU via Smart Growth Operational Programme).
AW was supported by the European Research Council (Advanced Grant IRQUAT), the European Commission (STREP project RAQUEL), the Spanish MINECO, projects FIS2013-40627-P and FIS2016-80681-P, as well as by the Generalitat de Catalunya, CIRIT projects 2014-SGR-966 and 2017-SGR-1127.


\clearpage

\appendix

\section*{SUPPLEMENTAL MATERIAL}

\section{A. Miscellaneous facts and lemmas}
In this appendix, we collect some standard facts about various functionals 
we use and prove some lemmas that we need in the proofs of main results.

Recall that the fidelity between two mixed states is defined as
\[
  F(\rho,\sigma): = \left\|\sqrt{\rho}\sqrt{\sigma}\right\|_1 
                  = \tr\sqrt{\sqrt{\rho}\sigma\sqrt{\rho}},
\]
and the trace distance is $\frac12\|\rho-\sigma\|_1$.
We use the notation $X\stackrel{\epsilon}{\approx}Y$ to mean $\|X-Y\|_1\le\epsilon$.

\begin{lemma}[Uhlmann \cite{Uhlmann}]
\label{Uhlmann}
The fidelity is alternatively characterized by the relation
\[
  F(\rho_A,\sigma_A) = \max_{U_B} \,\bigl|\bra{\phi}{\1_A\otimes U_B}\ket{\psi}\bigr|,
\]
where $\ket{\phi}_{AB}$ and $\ket{\psi}_{AB}$ are purifications of $\rho_A$ and 
$\sigma_A$, respectively, and $U_B$ ranges over unitaries.
\end{lemma}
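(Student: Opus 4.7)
The plan is to reduce the claim to a matrix-analytic identity by working with the canonical purification, then invoke a standard variational formula for the trace norm.

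First, I would recall the two facts that make the reduction clean. (i) Any purification of $\rho_A$ on a composite system $AB$ can be written as $(\1_A \otimes V)\ket{\sqrt{\rho}}_{AB}$, where $\ket{\sqrt{\rho}}_{AB} = (\sqrt{\rho}\otimes\1)\sum_i \ket{i}_A\ket{i}_B$ is the canonical purification and $V$ is an isometry from the canonical purifying register into $B$ (a unitary if the dimensions agree). Enlarging $B$ if necessary, we may assume $V$ is unitary; doing the same for $\sigma_A$ with some $W$, the maximization $\max_{U_B}|\bra{\phi}\1_A\otimes U_B\ket{\psi}|$ is equivalent to $\max_X|\bra{\sqrt{\rho}}\1_A\otimes X\ket{\sqrt{\sigma}}|$ with $X = V^\dagger U_B W$ ranging over all unitaries. (ii) The ``ricochet'' identity $(\1\otimes X)\sum_i\ket{i}\ket{i} = (X^T\otimes \1)\sum_i\ket{i}\ket{i}$ on the maximally entangled vector.

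Second, I would compute directly
\begin{equation*}
\bra{\sqrt{\rho}}(\1_A\otimes X)\ket{\sqrt{\sigma}}
 = \sum_{ij}\bra{i}\sqrt{\rho}\sqrt{\sigma}\ket{j}\,\bra{i}X^T\ket{j}
 = \tr\bigl(\sqrt{\rho}\sqrt{\sigma}\,X^T\bigr),
\end{equation*}
using (ii) to rewrite $\1\otimes X$ as $X^T\otimes\1$ on the unnormalised maximally entangled vector before contracting both copies of $A$.

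Third, I would apply the well-known variational characterisation of the trace norm: for any operator $M$, $\max_{U\text{ unitary}}|\tr(MU)| = \|M\|_1$, achieved by taking $U$ to be the adjoint of the partial isometry in the polar decomposition $M = |M^\dagger|\,W$ (equivalently, by SVD). Applied with $M = \sqrt{\rho}\sqrt{\sigma}$ and noting that $X \mapsto X^T$ is a bijection on the unitary group, this yields
\begin{equation*}
\max_{U_B}\bigl|\bra{\phi}\1_A\otimes U_B\ket{\psi}\bigr|
 = \bigl\|\sqrt{\rho}\sqrt{\sigma}\bigr\|_1
 = F(\rho,\sigma).
\end{equation*}

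The only mildly delicate point is the dimension bookkeeping in step one — ensuring that after possibly padding $B$ the map $U_B \mapsto V^\dagger U_B W$ does sweep out all unitaries of the canonical purifying register, so that no suprema are lost. Once that is handled, the remaining two steps are essentially algebraic manipulations and an appeal to the SVD-based trace-norm identity, so I do not expect any further obstacle.
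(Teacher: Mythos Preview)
Your argument is correct: the reduction to the canonical purification, the ricochet identity, and the variational formula $\max_U|\tr(MU)|=\|M\|_1$ together give Uhlmann's theorem, and you have correctly identified the only subtlety (the dimension bookkeeping when padding $B$, which is harmless because enlarging the purifying register only replaces unitaries by contractions in the relevant block, and the trace-norm maximum is already attained on a unitary).

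There is nothing to compare against, however: the paper does not prove this lemma. It is quoted without proof in the ``Miscellaneous facts and lemmas'' appendix as a standard result attributed to Uhlmann, and is used only as a tool later on. So your write-up is a self-contained proof of a statement the paper simply cites.
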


\begin{lemma}[Fuchs and van de Graaf \cite{FG}] 
\label{FG}
For any two states $\rho$ and $\sigma$, 
fidelity and trace distance are related by
\[
  1-F(\rho,\sigma) \leq \frac12 \|\rho-\sigma\|_1 \leq \sqrt{1-F(\rho,\sigma)^2}.
\]
\end{lemma}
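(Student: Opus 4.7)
The plan is to prove the two inequalities separately, using Lemma \ref{Uhlmann} for the right-hand bound and a reduction to the classical case via a POVM for the left-hand bound.

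For the upper bound $\frac12\|\rho-\sigma\|_1 \leq \sqrt{1-F(\rho,\sigma)^2}$, I would first invoke Uhlmann's theorem to pick purifications $\ket{\phi}_{AB}, \ket{\psi}_{AB}$ with $|\braket{\phi}{\psi}| = F(\rho,\sigma)$. Then I would use the standard pure-state identity $\bigl\|\ketbra{\phi}{\phi} - \ketbra{\psi}{\psi}\bigr\|_1 = 2\sqrt{1-|\braket{\phi}{\psi}|^2}$, which follows from diagonalising the traceless rank-two difference inside the two-dimensional span of $\ket{\phi}$ and $\ket{\psi}$. Monotonicity of the trace norm under partial trace (partial trace is a CPTP map, hence a contraction in $\|\cdot\|_1$) then gives $\|\rho-\sigma\|_1 \leq 2\sqrt{1-F(\rho,\sigma)^2}$, as required.

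For the lower bound $1 - F(\rho,\sigma) \leq \frac12\|\rho-\sigma\|_1$, I would pass to a classical problem. The idea is: choose any POVM $\{M_i\}$ and form the induced distributions $p_i := \tr(M_i\rho)$, $q_i := \tr(M_i\sigma)$. Two facts are needed. First, $\sum_i|p_i-q_i| \leq \|\rho-\sigma\|_1$, again by contractivity of trace norm under the measurement channel. Second, there exists a POVM for which the classical fidelity $\sum_i\sqrt{p_iq_i}$ equals $F(\rho,\sigma)$; this is Fuchs' theorem, and can be realised by the square-root measurement built from the operator $\sigma^{1/2}\rho^{-1/2}(\rho^{1/2}\sigma\rho^{1/2})^{1/2}\rho^{-1/2}$ appearing in the Uhlmann dilation. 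With that POVM in hand, the proof reduces to the purely classical estimate
\[
  1 - \sum_i \sqrt{p_iq_i} = \tfrac12\sum_i \bigl(\sqrt{p_i}-\sqrt{q_i}\bigr)^2 \leq \tfrac12\sum_i \bigl|\sqrt{p_i}-\sqrt{q_i}\bigr|\cdot\bigl(\sqrt{p_i}+\sqrt{q_i}\bigr) = \tfrac12\sum_i |p_i-q_i|,
\]
which chains with the two facts above to give the desired bound.

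The only nontrivial step is the construction of the POVM achieving $F(\rho,\sigma)$ classically; this is the standard content of Fuchs' theorem and is where I expect the most care to be needed, especially to handle rank-deficient $\rho$ (where one passes to the support and takes pseudo-inverses). Everything else is either Uhlmann (already given), monotonicity of trace norm under CPTP maps, or the elementary Cauchy--Schwarz-style inequality on the classical side. A fully self-contained alternative would replace the POVM step by using the variational formula $\frac12\|\rho-\sigma\|_1 = \max_{0 \leq P \leq \1}\tr P(\rho-\sigma)$ with $P$ the projector onto the positive part of $\rho-\sigma$, combined with a direct estimate of $\tr P(\rho-\sigma)$ in terms of $F(\rho,\sigma)$; but that route is computationally heavier than the POVM reduction.
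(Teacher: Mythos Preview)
Your argument is correct and is essentially the standard proof of the Fuchs--van de Graaf inequalities. Note, however, that the paper does not actually prove this lemma: it is stated with a citation to \cite{FG} and used as a black-box tool in the appendix, so there is no ``paper's own proof'' to compare against. Your write-up would serve perfectly well as a self-contained justification should one be desired.
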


\begin{lemma}[{Winter}~\cite{gentle}]
\label{gentle-lemma}
For a subnormalized state $\rho$, i.e. $\rho\ge 0$ and $\tr\rho\le 1$, and operator $0\le X\le \1$, if $\tr\rho X\ge 1-\lambda$, then
\[
\|\rho-\sqrt{X}\rho\sqrt{X}\|_1\le\sqrt{8\lambda} .
\]
\end{lemma}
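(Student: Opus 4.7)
The plan is to bound the trace distance via a lower bound on the fidelity, using precisely the two preceding lemmas. The first step is an elementary operator inequality: since $0\le X\le\1$ and $\sqrt{x}\ge x$ pointwise on $[0,1]$, the spectral theorem gives $\sqrt{X}\ge X$, hence $\tr(\rho\sqrt{X})\ge\tr(\rho X)\ge 1-\lambda$.

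The second step is to use Uhlmann's theorem (Lemma \ref{Uhlmann}) to lower bound $F(\rho,\sqrt{X}\rho\sqrt{X})$. Fix a purification $\ket{\phi}_{AR}$ of $\rho$ and set $\ket{\phi'}_{AR}:=(\sqrt{X}\otimes\1_R)\ket{\phi}$, which is a (subnormalized) purification of $\sqrt{X}\rho\sqrt{X}$. Their overlap is $\braket{\phi}{\phi'}=\bra{\phi}(\sqrt{X}\otimes\1_R)\ket{\phi}=\tr(\rho\sqrt{X})\ge 1-\lambda$. Since the fidelity equals the maximum of such overlaps over purifications, we conclude $F(\rho,\sqrt{X}\rho\sqrt{X})\ge 1-\lambda$.

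The third step is to invoke Fuchs--van de Graaf (Lemma \ref{FG}) to convert the fidelity bound into a trace-distance bound: $\tfrac12\|\rho-\sqrt{X}\rho\sqrt{X}\|_1\le\sqrt{1-F^2}\le\sqrt{1-(1-\lambda)^2}=\sqrt{2\lambda-\lambda^2}\le\sqrt{2\lambda}$, which rearranges to the claimed $\|\rho-\sqrt{X}\rho\sqrt{X}\|_1\le\sqrt{8\lambda}$.

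The only mild subtlety is that $\rho$ and $\sqrt{X}\rho\sqrt{X}$ may be subnormalized, so one must either rely on the (standard) subnormalized versions of Uhlmann and Fuchs--van de Graaf, or first reduce to the normalized case by embedding into a slightly larger Hilbert space, replacing $\rho$ by $\tilde\rho:=\rho+(1-\tr\rho)\ketbra{e}{e}$ with $\ket{e}$ orthogonal to the support of $\rho$, and extending $X$ by zero on $\ket{e}$. Then $\tr(\tilde\rho\tilde X)=\tr(\rho X)\ge 1-\lambda$, $\sqrt{\tilde X}\tilde\rho\sqrt{\tilde X}=\sqrt{X}\rho\sqrt{X}$, and the bound on $\|\tilde\rho-\sqrt{\tilde X}\tilde\rho\sqrt{\tilde X}\|_1$ dominates the bound on $\|\rho-\sqrt{X}\rho\sqrt{X}\|_1$. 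I do not anticipate a real obstacle; the two preceding lemmas already do essentially all of the heavy lifting.
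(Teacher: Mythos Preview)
The paper does not actually prove this lemma; it is quoted from Winter's original paper and used as a black box, so there is no ``paper's proof'' to compare against. Your argument is one of the standard routes to the gentle-measurement lemma and is correct.

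The only point that deserves a second look is the one you already flag: Lemmas~\ref{Uhlmann} and~\ref{FG} as stated in the paper are for normalized states, whereas $\sqrt{X}\rho\sqrt{X}$ is in general subnormalized. Your padding trick makes $\tilde\rho$ normalized, but $\sqrt{\tilde X}\tilde\rho\sqrt{\tilde X}=\sqrt{X}\rho\sqrt{X}$ still has trace $\tr(\rho X)\le 1$, so one more embedding is needed: add $(1-\tr\sigma)\ketbra{f}{f}$ to $\sigma:=\sqrt{X}\rho\sqrt{X}$ on a further orthogonal direction $\ket{f}$. Then both states are normalized, $F$ is unchanged (since $\sqrt{\tilde\rho}\ket{f}=0$), and the trace distance only grows, so the bound carries back. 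With that extra line the argument is watertight.

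As an aside, there is a shorter proof that avoids fidelity and normalization issues entirely: use the triangle inequality and Cauchy--Schwarz to get
\[
\|\rho-\sqrt{X}\rho\sqrt{X}\|_1\le 2\|(\1-\sqrt{X})\rho\|_1\le 2\|(\1-\sqrt{X})\sqrt{\rho}\|_2\,\|\sqrt{\rho}\|_2,
\]
and then $(\1-\sqrt{X})^2\le \1-X$ together with $\tr\rho\le 1$ gives $2\sqrt{\lambda}$, which is sharper than the stated $\sqrt{8\lambda}$.
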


\begin{lemma}[{Alicki-Fannes}~\cite{continuity}]
\label{continuity-cond-entropy}
For two states $\rho_{XY}$ and $\sigma_{XY}$ on ${\cal H}_X\otimes {\cal H}_Y$, if $\|\rho_{XY}-\sigma_{XY}\|_1\le\lambda\le1$, then 
\[
|S(X|Y)_{\rho}-S(X|Y)_{\sigma}|\le 4\lambda\log|X|+2h(\lambda),
\]
where $h(\lambda):=-\lambda\log\lambda-(1-\lambda)\log(1-\lambda)$.
\end{lemma}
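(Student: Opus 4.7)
The plan is to follow the classical Alicki--Fannes argument: sandwich both $S(X|Y)_\rho$ and $S(X|Y)_\sigma$ between values attached to a single auxiliary state, and then use the uniform bound $|S(X|Y)|\le\log|X|$ on the correction terms. Setting $\varepsilon=\tfrac12\|\rho_{XY}-\sigma_{XY}\|_1\le\lambda/2$, I would write the Jordan decomposition $\rho_{XY}-\sigma_{XY}=\varepsilon(\tau_+-\tau_-)$, where $\tau_+,\tau_-$ are the normalized positive and negative parts (hence states). Then the operator $M:=\rho_{XY}+\varepsilon\tau_-=\sigma_{XY}+\varepsilon\tau_+$ has trace $1+\varepsilon$, and $\omega:=M/(1+\varepsilon)$ is a state admitting the two convex decompositions
\[
\omega=\frac{1}{1+\varepsilon}\rho_{XY}+\frac{\varepsilon}{1+\varepsilon}\tau_-=\frac{1}{1+\varepsilon}\sigma_{XY}+\frac{\varepsilon}{1+\varepsilon}\tau_+ .
\]

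Next I would invoke the concavity of the quantum conditional entropy together with its ``flag'' upper bound: for any decomposition $\omega=\sum_i p_i\omega_i$,
\[
\sum_i p_i S(X|Y)_{\omega_i}\le S(X|Y)_\omega\le \sum_i p_i S(X|Y)_{\omega_i}+h(p),
\]
where the left inequality is standard concavity (derivable by purifying the mixture with a classical register and using strong subadditivity to remove it) and the right inequality follows by attaching the same classical flag to $Y$ and noting that $S(X|YF)=\sum_i p_i S(X|Y)_{\omega_i}$ while conditioning on the flag can only reduce conditional entropy by at most $h(p)$. Applying this to both decompositions of $\omega$ and subtracting, I isolate
\[
S(X|Y)_\rho-S(X|Y)_\sigma\le \varepsilon\bigl(S(X|Y)_{\tau_+}-S(X|Y)_{\tau_-}\bigr)+2(1+\varepsilon)h\!\left(\tfrac{\varepsilon}{1+\varepsilon}\right),
\]
together with the symmetric inequality obtained by swapping $\rho\leftrightarrow\sigma$.

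The endgame is purely arithmetic: use $|S(X|Y)_{\tau_\pm}|\le\log|X|$ to bound the first term by $2\varepsilon\log|X|$, and simplify the binary-entropy overhead using the identity $(1+\varepsilon)h(\varepsilon/(1+\varepsilon))=h(\varepsilon)+\varepsilon\log(1+\varepsilon)$ together with monotonicity of $h$ on $[0,1/2]$ and the assumption $\varepsilon\le\lambda/2\le 1/2$ to upgrade $2\varepsilon\log|X|+2(1+\varepsilon)h(\varepsilon/(1+\varepsilon))$ into the stated form $4\lambda\log|X|+2h(\lambda)$ (the factor $4$ absorbs both the conversion $\varepsilon\le\lambda/2$ and some slack in the $h$-term).

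The main obstacle is the flag upper bound: a naive direct application of Fannes to $S(XY)$ and $S(Y)$ separately produces a bound scaling like $\log(|X||Y|)$, and the whole point of the Alicki--Fannes construction is to avoid this by routing both $\rho$ and $\sigma$ through the same convex combination $\omega$, trading the $\log|Y|$ contribution for an additive $h$-term. Once that step is secured, the rest is elementary bookkeeping with convex decompositions and monotonicity of the binary entropy.
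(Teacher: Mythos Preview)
The paper does not prove this lemma; it merely states it with a citation to Alicki and Fannes \cite{continuity} and uses it as a black box. Your proposal is precisely the original Alicki--Fannes argument, and it is correct in substance.

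Two small arithmetic remarks. First, when you pair the upper bound from one decomposition of $\omega$ with the lower bound from the other, only a single $(1+\varepsilon)h\!\left(\tfrac{\varepsilon}{1+\varepsilon}\right)$ appears, not twice that; your factor of $2$ is unnecessary (though harmless for reaching the stated bound, which has slack). Second, the identity you quote is not quite right: a direct computation gives $(1+\varepsilon)h\!\left(\tfrac{\varepsilon}{1+\varepsilon}\right)=-\varepsilon\log\varepsilon+(1+\varepsilon)\log(1+\varepsilon)$, not $h(\varepsilon)+\varepsilon\log(1+\varepsilon)$. Either way, for $\varepsilon\le\lambda/2\le 1/2$ one easily bounds $-\varepsilon\log\varepsilon\le h(\varepsilon)\le h(\lambda)$ and $(1+\varepsilon)\log(1+\varepsilon)\le h(\lambda)$ (or simply absorbs it into the $4\lambda\log|X|$ term when $|X|\ge 2$), so the stated inequality follows with room to spare. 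The core idea---routing both states through the common mixture $\omega$ to avoid any $\log|Y|$ dependence---is exactly right.
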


\begin{lemma}[Dupuis \emph{et al.} \cite{decouple}] 
  \label{Dupuis}
  Given a state $\rho_{AE}$, let $\epsilon>0$ and $\cT:{A\to B}$ be a 
  CPTP map with Choi-Jamio\l{}kowski representation 
  $\tau_{RB}=(\id_R\otimes\cT)\Phi_{RA}$. Then,
  \begin{equation}\begin{split}
    \int_{U_A} &\| \cT(U_A\rho_{AE}U_A^{\dagger}) - \tau_{B}\otimes\rho_E\|_1dU \\
               &\phantom{====}
                \le 2^{-\frac{1}{2}[H^{\epsilon}_{\min}(A|E)_{\rho}+H^{\epsilon}_{\min}(R|B)_{\tau}]}
                      +12\epsilon,
  \end{split}\end{equation}
  where the integral is over the Haar measure on the unitary group $\mathcal{U}(A)$.
\end{lemma}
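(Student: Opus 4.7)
The plan is to reduce the $L^1$ decoupling estimate to an $L^2$ computation under the Haar measure, then translate the resulting second moments into conditional collision entropies, and finally smooth to min-entropies. Writing $X(U) := \cT(U_A \rho_{AE} U_A^\dagger) - \tau_B\otimes\rho_E$, one first applies the operator inequality $\|X\|_1 \le \sqrt{|BE|}\,\|X\|_2$ followed by Jensen's inequality under the square root, giving
\[
  \int_{U_A}\|X(U)\|_1\,dU
     \;\le\; \sqrt{\,|BE|\int_{U_A}\tr\bigl[X(U)^2\bigr]\,dU\,}.
\]

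The core calculation is the Haar integral $\int\tr[X(U)^2]\,dU$. Expanding the square and using the first and second moment formulas for the unitary group,
\[
  \int dU\,U A U^\dagger = \tfrac{\tr A}{|A|}\1,\quad
  \int dU\,(U\ox U) M (U\ox U)^\dg = \alpha\,\1 + \beta\,F,
\]
with $\alpha,\beta$ determined by $\tr M$ and $\tr(FM)$ and $F$ the swap, the first-moment contribution exactly reproduces $\tau_B\ox\rho_E$ and cancels the subtracted quantity in $X(U)$. Rewriting the action of $\cT$ through its Choi--Jamio\l kowski state $\tau_{RB}$ turns the surviving `swap' term into a bilinear expression in $\rho_{AE},\rho_E$ and $\tau_{RB},\tau_B$, which a short calculation bounds by $\tfrac{1}{|BE|}\,2^{-H_2(A|E)_\rho - H_2(R|B)_\tau}$, with $H_2$ the conditional collision entropy (satisfying $H_2 \ge H_{\min}$). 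Combining with the $L^1$--$L^2$ step above yields a bound of the form $2^{-\tfrac12 [H_{\min}(A|E)_\rho + H_{\min}(R|B)_\tau]}$ for exact min-entropies.

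To pass to smooth min-entropies I would rerun the preceding argument with subnormalised states $\tilde\rho_{AE}$ and $\tilde\tau_{RB}$ chosen within trace distance $\epsilon$ of the originals and achieving $H^\epsilon_{\min}(A|E)_\rho$ and $H^\epsilon_{\min}(R|B)_\tau$ respectively. Monotonicity of the trace norm under $\cT$ and under conjugation by $U_A$ transports the bound back to $\rho_{AE}$ and $\tau_{RB}$ at the cost of an additive $O(\epsilon)$ contribution, and tracking the constants carefully (a few uses of the triangle inequality and of the standard fact that CPTP maps on close subnormalised states give close outputs) produces the stated $12\epsilon$. The main obstacle is the Haar bookkeeping in the second paragraph: one must verify that the cancellation between the subtracted $\tau_B\ox\rho_E$ and the identity contribution of the twirl is exact, so that only the variance-like term proportional to $2^{-H_2-H_2}$ survives in the $L^2$ average, and that the Choi-state rewriting of $\cT$ correctly pairs the two collision entropies; the smoothing step and the final R\'enyi monotonicity $H_2\ge H_{\min}$ are straightforward by comparison.
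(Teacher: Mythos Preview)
The paper does not prove this lemma at all; it is quoted verbatim from Dupuis \emph{et al.}\ \cite{decouple} as one of the ``standard facts'' in Appendix~A. So there is no paper proof to compare against, only the original one in \cite{decouple}. Your sketch follows the overall architecture of that original proof (reduce $L^1$ to a second-moment computation under the Haar measure, use the swap trick, then smooth), but it contains a real gap at the $L^1\to L^2$ step.

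The inequality you invoke, $\|X\|_1\le\sqrt{|BE|}\,\|X\|_2$, is too crude: the Haar second moment it produces is essentially $\tr\rho_{AE}^2\cdot\tr\tau_{RB}^2$ up to dimension factors, i.e.\ unconditional R\'enyi-$2$ entropies, and there is no algebraic reason for the stray $|BE|$ to convert these into the \emph{conditional} collision entropies $H_2(A|E)_\rho$ and $H_2(R|B)_\tau$. The claim that ``a short calculation bounds [the swap term] by $\tfrac{1}{|BE|}\,2^{-H_2(A|E)_\rho-H_2(R|B)_\tau}$'' is exactly where this fails. The device used in \cite{decouple} is to replace the naive norm bound by the weighted one
\[
  \|X_{BE}\|_1 \;\le\; \bigl\| (\sigma_B\otimes\zeta_E)^{-1/4}\,X_{BE}\,(\sigma_B\otimes\zeta_E)^{-1/4} \bigr\|_2,
\]
valid for any full-rank states $\sigma_B,\zeta_E$, and to carry the $(\sigma_B\otimes\zeta_E)^{-1/4}$ weights through the Haar integral. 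It is precisely this sandwiching that turns the surviving swap term into $2^{-H_2(A|E)_\rho-H_2(R|B)_\tau}$ (with the optimal choices of $\sigma_B$, $\zeta_E$), after which $H_2\ge H_{\min}$ and the smoothing step proceed as you describe. Without this weighting your argument does not reach the stated bound.
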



\begin{lemma}[{Horodecki \emph{et al.}~\cite{huge-delta}}]
\label{key-ineq} 
The function defined by $g(\rho_{AB}):=S(\rho_{AB})+E_r(\rho_{AB})$ is non-decreasing under CLODCC.
\qed
\end{lemma}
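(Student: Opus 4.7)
The plan is to exploit the variational identity
\begin{equation*}
  g(\rho_{AB}) = S(\rho_{AB}) + E_r(\rho_{AB}) = \inf_{\sigma\in\mathrm{SEP}(A:B)} \bigl[ -\tr \rho_{AB} \log \sigma_{AB} \bigr],
\end{equation*}
which follows immediately from $E_r(\rho)=\inf_\sigma S(\rho\|\sigma)=\inf_\sigma[-S(\rho)-\tr\rho\log\sigma]$. Since CLODCC is generated by local unitaries and dephasing-channel communication, I argue monotonicity of $g$ under each generator separately.

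Local unitaries on either side leave $S$ and $E_r$ individually invariant. The substantive step is dephasing-channel communication, which I decompose into (a) a local dephasing $\Delta_{A_1}=\sum_i\proj{i}_{A_1}(\cdot)\proj{i}_{A_1}$ applied by Alice to the sent subsystem $A_1$ of $A=A_0A_1$, keeping the bipartition $A_0A_1{:}B$ fixed; followed by (b) the relabelling that moves $A_1$ to Bob's side, changing the bipartition to $A_0{:}A_1B$ while leaving the state unchanged. For (a), $\Delta_{A_1}$ is unital and self-adjoint in the trace inner product, and $\log$ is operator-concave on the positive cone, so operator Jensen gives $\Delta_{A_1}(\log\sigma)\le\log\Delta_{A_1}(\sigma)$ for $\sigma>0$; since $\rho\ge 0$,
\begin{equation*}
  -\tr \Delta_{A_1}(\rho)\log\sigma = -\tr \rho\,\Delta_{A_1}(\log\sigma) \;\ge\; -\tr \rho \log \Delta_{A_1}(\sigma).
\end{equation*}
Locality of $\Delta_{A_1}$ implies $\Delta_{A_1}(\mathrm{SEP})\subseteq\mathrm{SEP}$, so infimising over $\sigma\in\mathrm{SEP}(A_0A_1{:}B)$ on the left (and noting that the right-hand infimum ranges over the smaller set $\Delta_{A_1}(\mathrm{SEP})$) yields $g(\Delta_{A_1}(\rho))\ge g(\rho)$.

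For (b), since $\rho':=\Delta_{A_1}(\rho)$ is diagonal on $A_1$, the same Jensen argument shows that both $\inf_{\sigma\in\mathrm{SEP}(A_0A_1{:}B)}[-\tr\rho'\log\sigma]$ and $\inf_{\tau\in\mathrm{SEP}(A_0{:}A_1B)}[-\tr\rho'\log\tau]$ are attained on states diagonal in the computational basis of $A_1$. Any such dephased state that is separable in \emph{either} bipartition takes the common form $\sum_{k,i} p_{k,i}\,\alpha^{k,i}_{A_0}\otimes\proj{i}_{A_1}\otimes\beta^{k,i}_B$, so the two minimising classes coincide; the bipartition relabelling therefore leaves $g$ unchanged, and chaining (a) and (b) delivers the claim. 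The main technical obstacle I anticipate is the singular case in which $\sigma$ has non-trivial kernel and $\log\sigma$ diverges: the variational formula, operator Jensen, and the trace manipulations above then need a standard $\sigma\mapsto(1-\epsilon)\sigma+\epsilon\1/|AB|$ regularisation with $\epsilon\to 0^+$, arranged so as not to destroy the infimum structure or the separability of approximants.
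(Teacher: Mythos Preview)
The paper does not supply its own proof of this lemma: it is quoted from \cite{huge-delta} and closed with a bare \qed. Your argument is correct and self-contained. The variational rewriting $g(\rho)=\inf_{\sigma\in\mathrm{SEP}}[-\tr\rho\log\sigma]$ together with operator Jensen for the unital, self-adjoint dephasing map is exactly the mechanism used in \cite{huge-delta}; your treatment of the bipartition swap via the observation that $\Delta_{A_1}\bigl(\mathrm{SEP}(A_0A_1{:}B)\bigr)=\Delta_{A_1}\bigl(\mathrm{SEP}(A_0{:}A_1B)\bigr)$ is clean and correct (both sets coincide with the ``common form'' you wrote down, since any state of that form is already $\Delta_{A_1}$-invariant and separable across either cut). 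The regularisation issue you flag is real but routine, and handled exactly as you indicate.
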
 

Here,
\[
  E_r(\rho_{AB}):=\min_{\sigma_{AB}\in \text{SEP}}S(\rho_{AB}\|\sigma_{AB})
\]
is the relative entropy of entanglement \cite{Er}, where the minimum is taken over 
the separable states (SEP), and $D(X\|Y)=\tr X(\log X-\log Y)$ is the relative entropy.

\medskip

\medskip
In the proof of Lemma \ref{preprocess}, we need the concept of typical operator and its properties. For a mixed state $\rho$, write it in its eigenbasis, $\rho=\sum_i\lambda_i\ketbra{i}{i}$. Consider $n$ copies of the state,
\[
\rho^{\otimes n}=\sum_{i^n}\lambda_{i^n}\ketbra{i^n}{i^n},
\]
where $i^n=i_1i_2\cdots i_n$. For $\delta>0$, the typical projector is defined as
\[
\Pi_{\delta}^{n}=\sum_{i^n\in{\cal T}_{\delta}^{n}}\ketbra{i^n}{i^n},
\]
where ${\cal T}_{\delta}^{n}:=\{i^n:|-\frac{1}{n}\log\lambda_{i^n}-S(\rho)|\le\delta\}$. The typical subspace is the supporting space of the the projector. From \cite{Schumacher,Winter-thesis}, we have
\begin{align}
\label{typ}
\tr\rho^{\otimes n}\Pi_{\delta}^{n}\ge1-\epsilon,\\
\label{typical-dimension}
2^{n[S(\rho)+\delta]}\ge\tr\Pi_{\delta}^{n}\ge(1-\epsilon)2^{n[S(\rho)-\delta]},
\end{align}
with $\epsilon:=e^{-c\delta^2n}$ and $c$ a constant.
 
For $n$ copies of a pure tripartite state $\ket{\psi}_{ABE}$, let ${\tilde{A}}$ and ${\tilde{B}}$ be the typical subspaces of $A^n$ and $B^n$ respectively, and $\Pi_{\tilde{A}}$ and $\Pi_{\tilde{B}}$ the typical projectors onto these typical subspaces, that is,
\begin{align*}
\tr\Pi_{\tilde{A}}\psi_A^{\otimes n}\ge1-\epsilon,\\
\tr\Pi_{\tilde{B}}\psi_B^{\otimes n}\ge1-\epsilon.
\end{align*}
Denote 
\begin{align*}
\ket{\Omega}_{\tilde{A}\tilde{B}E^n}&=\Pi_{\tilde{A}}\otimes\Pi_{\tilde{B}}\ket{\psi}_{ABE}^{\otimes n},\\
\ket{\Psi}_{\tilde{A}\tilde{B}E^n}&=\frac{1}{\braket{\Omega}{\Omega}}\ket{\Omega}_{\tilde{A}\tilde{B}E^n},
\end{align*}
and write $\Pi_{\tilde{A}}$ and $\Pi_{\tilde{B}}$ in their eigenbasis respectively
\begin{align*}
\Pi_{\tilde{A}}&=\sum_i \ketbra{i}{i}_{\tilde{A}},\\
\Pi_{\tilde{B}}&=\sum_j \ketbra{j}{j}_{\tilde{B}},
\end{align*}
where $\ket{i}_{\tilde{A}}\in {\cal H}_{A^n}$ and $\ket{j}_{\tilde{B}}\in {\cal H}_{B^n}$. By typicality in Eq. (\ref{typical-dimension}), we have
\begin{align*}
\frac{\log|\tilde{A}|}{n}&\approx {S(\psi_A)},\\
\frac{\log|\tilde{B}|}{n}&\approx {S(\psi_B)}.
\end{align*}
We now decompose ${\cal H}_{A^n}$ and ${\cal H}_{B^n}$ as
\begin{align*}
{\cal H}_{A^n}&={\cal H}_{A_I}\otimes {\cal H}_{A_P},\\
{\cal H}_{B^n}&={\cal H}_{B_I}\otimes {\cal H}_{B_P},
\end{align*}
by unitaries $U:A^n\to A_IA_P$ and $V:B^n\to B_IB_P$ satisfying
\begin{align}
\label{U}
U\ket{i}_{\tilde{A}}&=\ket{i}_{A_I}\otimes\ket{0}_{A_P},\\
\label{V}
V\ket{i}_{\tilde{B}}&=\ket{i}_{B_I}\otimes\ket{0}_{B_P}.
\end{align}
Notice that $|\tilde{A}|=|A_I|$ and $|\tilde{B}|=|B_I|$, so we get
\begin{align*}
U\otimes V\ket{\Psi}_{\tilde{A}\tilde{B}E^n}=\ket{\Psi}_{A_IB_IE^n}\otimes \ket{0}_{A_P}\otimes\ket{0}_{B_P}.
\end{align*}
We call $A_I$, $B_I$ the information parts, and $A_P$, $B_P$ the purity parts.
\begin{lemma}
\label{preprocess}
Given $\delta >0$ and $\epsilon=e^{-c\delta^2n}$ with a constant $c$, for $n$
copies of a pure tripartite state $\ket{\psi}_{ABE}$ where $n$ is large, and unitary operators $U$ and $V$ in Eqs. (\ref{U}) (\ref{V}), 
\begin{align}
\label{inf-pure}
&\left\|U\otimes V(\ketbra{\psi}{\psi})^{\otimes n} U^{\dagger}\otimes V^{\dagger}-\right.\nonumber\\
&~\left.\ketbra{\Psi}{\Psi}_{A_IB_IE^n}\otimes \ketbra{0}{0}_{A_P}\otimes\ketbra{0}{0}_{B_P}\right\|_1\le 2\epsilon+4\sqrt{\epsilon},
\end{align}
with $\frac{\log|A_I|}{n}=\frac{\log|\tilde{A}|}{n}\approx {S(\psi_A)}$, $\frac{\log|B_I|}{n}=\frac{\log|\tilde{B}|}{n}\approx {S(\psi_B)}$, and $\frac{\log|A_P|}{n}\approx \log |A|-S(\psi_A)$, $\frac{\log|B_P|}{n}\approx \log |B|-S(\psi_B)$, and all the entropy relations in $\ket{\Psi}_{A_IB_IE^n}$ is almost the same as those in $\ket{\psi}_{ABE}^{\otimes n}$.
\end{lemma}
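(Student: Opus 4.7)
The plan is to combine a two-sided typical projection with the gentle measurement lemma, and then note that the unitaries $U,V$ are \emph{defined} precisely so as to peel off the product form once we are inside the doubly-typical subspace.

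First I would lower-bound the weight of $\psi^{\otimes n}$ on the product of the typical subspaces. Using the identity
\[
  \1 - \Pi_{\tilde A}\otimes \Pi_{\tilde B}
   = (\1-\Pi_{\tilde A})\otimes \1_{B} + \Pi_{\tilde A}\otimes(\1-\Pi_{\tilde B})
\]
together with the single-sided bounds $\tr\Pi_{\tilde A}\psi_A^{\otimes n}\ge 1-\epsilon$ and $\tr\Pi_{\tilde B}\psi_B^{\otimes n}\ge 1-\epsilon$, one gets $\tr\psi^{\otimes n}(\Pi_{\tilde A}\otimes \Pi_{\tilde B})\ge 1-2\epsilon$. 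Applying the gentle measurement lemma (Lemma \ref{gentle-lemma}) with $X=\Pi_{\tilde A}\otimes\Pi_{\tilde B}$, which commutes with $\1_{E^n}$ so can be regarded as an operator on $ABE$, and $\lambda=2\epsilon$ yields
\[
  \bigl\| \psi^{\otimes n}_{ABE}-\ketbra{\Omega}{\Omega}_{\tilde A\tilde B E^n}\bigr\|_1 \le 4\sqrt{\epsilon}.
\]

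Next, since $\ket{\Psi}=\ket{\Omega}/\sqrt{\langle\Omega|\Omega\rangle}$ we have $\ketbra{\Omega}{\Omega}=\langle\Omega|\Omega\rangle\,\ketbra{\Psi}{\Psi}$, so
\[
  \bigl\|\ketbra{\Omega}{\Omega}-\ketbra{\Psi}{\Psi}\bigr\|_1=|1-\langle\Omega|\Omega\rangle|\le 2\epsilon.
\]
Combining via the triangle inequality gives $\|\psi^{\otimes n}-\ketbra{\Psi}{\Psi}_{\tilde A\tilde B E^n}\|_1\le 2\epsilon+4\sqrt{\epsilon}$. Finally I apply the isometry $U\otimes V\otimes\1_{E^n}$, which leaves the trace distance invariant. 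By the defining relations (\ref{U}) and (\ref{V}), $U\otimes V$ sends each basis vector $\ket{i}_{\tilde A}\ket{j}_{\tilde B}$ to $\ket{i}_{A_I}\ket{0}_{A_P}\ket{j}_{B_I}\ket{0}_{B_P}$, hence by linearity
\[
  (U\otimes V)\ket{\Psi}_{\tilde A\tilde B E^n}
   =\ket{\Psi}_{A_IB_IE^n}\otimes \ket{0}_{A_P}\otimes\ket{0}_{B_P},
\]
which is exactly the form needed, and Eq.~(\ref{inf-pure}) follows.

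The dimension claims are immediate: $|A_I|=|\tilde A|$ and $|A_P|=|A|^n/|\tilde A|$, so Eq.~(\ref{typical-dimension}) gives $\tfrac1n\log|A_I|\approx S(\psi_A)$ and $\tfrac1n\log|A_P|\approx \log|A|-S(\psi_A)$, and similarly for $B$. The statement about entropic quantities of $\ket{\Psi}_{A_IB_IE^n}$ matching those of $\psi^{\otimes n}_{ABE}$ follows from the $\epsilon$-closeness established above together with the Alicki-Fannes continuity estimate (Lemma \ref{continuity-cond-entropy}), since entropies and conditional entropies on subsystems of bounded dimension change by at most $O(\sqrt{\epsilon}\,n)$, which is $o(n)$. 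I do not anticipate any genuine obstacle here; the only place that requires care is the two-sided typicality step, making sure the correct union-bound-like identity is used so that the constant out front is $2\epsilon$ and not worse.
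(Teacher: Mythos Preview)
Your proposal is correct and follows essentially the same route as the paper's own proof: two-sided typicality via an operator (in)equality to get $\langle\Omega|\Omega\rangle\ge 1-2\epsilon$, the gentle measurement lemma with $\lambda=2\epsilon$ for the $4\sqrt{\epsilon}$ term, normalisation of $\ket{\Omega}$ to $\ket{\Psi}$ for the additional $2\epsilon$, then unitary invariance of the trace norm under $U\otimes V$ and Alicki--Fannes continuity for the entropy statements. The only cosmetic difference is that you write the two-sided decomposition as the exact identity $(\1-\Pi_{\tilde A})\otimes\1+\Pi_{\tilde A}\otimes(\1-\Pi_{\tilde B})$, whereas the paper uses the looser operator inequality with $\1_A$ in place of $\Pi_{\tilde A}$ in the second term; both give the same $1-2\epsilon$ bound.
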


\begin{proof}
By the operator inequality 
\[
\1_{A^n}\otimes\1_{B^n}-\Pi_{\tilde{A}}\otimes \Pi_{\tilde{B}}\le \1_{A^n}\otimes(\1_{B^n}-\Pi_{\tilde{B}})+(\1_{A^n}-\Pi_{\tilde{A}})\otimes\1_{B^n},
\]
we get
\[
\braket{\Omega}{\Omega}_{\tilde{A}\tilde{B}E^n}\ge 1-2\epsilon.
\]
By Lemma \ref{gentle-lemma}, we have
\[
\|(\ketbra{\psi}{\psi})^{\otimes n}-\ketbra{\Omega}{\Omega}\|_1\le4\sqrt{\epsilon},
\]
from which, 
\[
\|(\ketbra{\psi}{\psi}_{ABE})^{\otimes n}-(\ketbra{\Psi}{\Psi})_{\tilde{A}\tilde{B}E^n}\|_1\le 2\epsilon+4\sqrt{\epsilon}=:\epsilon'
\]
that implies Ineq. (\ref{inf-pure}) under unitary operators $U$ and $V$. 

Notice that $U$ and $V$ are local unitaries, thus the entropies and the conditional entropies in ${\psi}_{ABE}^{\otimes n}$ remain invariant. By the non-increasing property of trace norm for Ineq. (\ref{inf-pure}) and Lemma \ref{continuity-cond-entropy} (including entropies), we get 
\begin{align*}
|S(A|B)_{\psi^{\otimes n}}-S(A_I|B_I)_{\Psi}|\le 4n\epsilon'\log|A|+2h(\epsilon'),\\
|S(A)_{\psi^{\otimes n}}-S(A_I)_{\Psi}|\le 4n\epsilon'\log|A|+2h(\epsilon'),
\end{align*} 
and similar other relations. Notice that $\epsilon'$ is exponentially small in $n$, so we conclude that $\ket{\Psi}_{A_IB_IE^n}$ encodes almost all information in $\ket{\psi}_{ABE}^{\otimes n}$.

\end{proof}

\medskip
{\lemma \label{global-purity} For a state $\rho_{AB}$, if global operations are allowed, 
the randomness extraction rate is $R_G(\rho_{AB})=I_G(\rho_{AB})=\log|AB|-S(\rho_{AB})$.}

\medskip
\begin{proof}
It is clear that a pure state $\ket{0}$ is the simplest ibit. So given a state $\rho_{AB}$, 
if global operations are allowed, then we can obtain purity at the rate 
$I_G=\log|AB|-S(\rho_{AB})$, by data compression and algorithmic cooling \cite{huge-delta}.
From the purity the same amount of randomness can be obtained which implies $R_G\ge I_G$ 

On the other side, any randomness extraction ends up in a state close to $\alpha$ state of Eq. (\ref{alpha-state}), 
on which we can undo the twisting $U=\sum_{i,k}\ketbra{ik}{ik}_{K_AK_B}\otimes U_{ik}$ to get a pure state on the key part which means 
$I_G\ge R_G$. So we conclude $I_G=R_G$.
\end{proof}

\begin{lemma}
\label{markov}
If states in an ensemble $\{p_i,\rho_i\}$ are close to a fixed state $\sigma$ on average, then most of the ensemble states are close to the fixed state. To be precise, if 
\be
  \sum_i p_i\|\rho_i-\sigma\|_1\le\epsilon,
\ee
then
\be
  \sum_{i: \|\rho_i-\sigma\|_1\le\sqrt{\epsilon}} p_i\ge1-\sqrt{\epsilon}.
\ee
\end{lemma}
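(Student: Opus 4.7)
The plan is a direct application of Markov's inequality, hence the name of the lemma. Define the nonnegative real-valued random variable $X$ taking value $X_i := \|\rho_i - \sigma\|_1$ with probability $p_i$. The hypothesis is exactly the statement that its expectation satisfies $\mathbb{E}[X] = \sum_i p_i \|\rho_i - \sigma\|_1 \le \epsilon$.

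Next I would split the sum according to the threshold $\sqrt{\epsilon}$. Since every $X_i \ge 0$, I can discard the terms with $X_i \le \sqrt{\epsilon}$ and bound
\[
  \epsilon \;\ge\; \sum_i p_i\,X_i \;\ge\; \sum_{i:\,X_i > \sqrt{\epsilon}} p_i\,X_i \;\ge\; \sqrt{\epsilon}\sum_{i:\,X_i > \sqrt{\epsilon}} p_i.
\]
Dividing by $\sqrt{\epsilon}$ yields $\sum_{i:\,\|\rho_i-\sigma\|_1 > \sqrt{\epsilon}} p_i \le \sqrt{\epsilon}$, and passing to the complement gives $\sum_{i:\,\|\rho_i-\sigma\|_1 \le \sqrt{\epsilon}} p_i \ge 1-\sqrt{\epsilon}$, as required.

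There is no substantial obstacle; the only care needed is to make sure the strict/non-strict inequality at the threshold lands on the correct side so that the claimed non-strict inequality $\|\rho_i-\sigma\|_1 \le \sqrt{\epsilon}$ in the indexing set is what survives. The case $\epsilon \ge 1$ is trivial (the conclusion $1-\sqrt{\epsilon} \le 0$ holds vacuously), so one may assume $\epsilon < 1$ without loss of generality.
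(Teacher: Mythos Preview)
Your proof is correct and is essentially identical to the paper's own argument: both observe that this is an instance of Markov's inequality, bound $\epsilon \ge \sum_{i:\,X_i>\sqrt{\epsilon}} p_i X_i \ge \sqrt{\epsilon}\sum_{i:\,X_i>\sqrt{\epsilon}} p_i$, and pass to the complement. Your additional care about the strict/non-strict threshold is a minor refinement the paper glosses over.
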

\begin{proof}
This is an instance of Markov's inequality. It is obvious that 
\be
  \sum_{i: \|\rho_i-\sigma\|_1\ge\sqrt{\epsilon}} \sqrt{\epsilon}p_i
    \le\sum_ {i: \|\rho_i-\sigma\|_1\ge\sqrt{\epsilon}} p_i\|\rho_i-\sigma\|_1 \le \epsilon,
\ee
thus
\be
  \sum_{\{i: \|\rho_i-\sigma\|_1\ge\sqrt{\epsilon}\}} p_i\le \sqrt{\epsilon},
\ee
concluding the proof.
\end{proof}

\medskip

\section{B. Proofs}
In this appendix, we provide the detailed proofs of the results claimed in the main text. 

We begin with the precise definition of the rate of private randomness distillation, both in the standard picture and in the dual one. In the standard picture \cite{BFW}, the protocol of extracting randomness from a bipartite state $\rho_{AE}$ is to perform a unitary map $U_{A\to K_AA'}$ followed by the measurement map
\be\label{measurement}
M_{A\to K_A}\otimes I_E(\cdot)=\sum_{i,j=1}^{|K_A||A'|}\ketbra{i}{i}_{K_A}\otimes\bra{ij}(\cdot)\ket{ij}_{K_AA'}
\ee
to approximate a state with $\log|K_A|$ randomness of the form $\frac{1}{|K_A|}\sum_{i=1}^{|K_A|}\ketbra{i}{i}_{K_A}\otimes \rho_E$, where ${\cal H}_A={\cal H}_{K_A}\otimes {\cal H}_{A'}$ and $\{i\}_{K_A}$ and $\{j\}_{A'}$ is the computational basis for subsystem $K_A$ and $A'$ respectively, and $\rho_E=\tr_{A}\rho_{AE}$. This is equivalent to performing the unitary $U_{A\to K_AA'}$ operation followed by the partial tracing of $A'$ and then a dephasing map on $K_A$ in the computational basis $\Delta_{K_A}(\cdot)=\sum_{i=1}^{|K_A|}\bra{i}(\cdot)\ket{i}_{K_A}\ketbra{i}{i}_{K_A}$. Notice that after the measurement map, the state of $K_A$ represents the classical outcome, so it cannot be involved in any further quantum processing, otherwise any initial state in $d-$dimensional space can always be used to generate $\log d$ private random bits by a projective measurement in an arbitrary basis followed by a further measurement in the complementary basis. Extending to the distributed scenario, the protocol of extracting independent randomness from a pure tripartite state $\ket{\psi}_{ABE}$ is: i) Alice performs a unitary $U_1: A\to A_1A_2$ and sends $A_2$ through the dephasing channel to Bob; ii) Bob performs a unitary $V_1: BA_2\to B_1B_2$ and sends $B_2$ through the dephasing channel to Alice; iii) Alice and Bob start the second round with the new state $\ket{\psi^{(2)}}_{A^{(2)}B^{(2)}E^{(2)}}=W_{B_2} \circ V_1\circ W_{A_2}\circ U_1\ket{\psi}_{ABE}$, where $W_X=\sum_{i=1}^{|X|}\ket{i}_{E_X}\ket{i}_X\bra{i}$ is the isometry of the dephasing channel on $X$ in the computational basis, and $A^{(2)}=A_1B_2$ and $B^{(2)}=B_1$, $E^{(2)}=EE_{A_2}E_{B_2}$; iv)  After some finite rounds, Alice and Bob perform local unitaries followed by the measurement map of the form (\ref{measurement}) to approximate a state with independent randomness $\log|K_A|$ at Alice's side and $\log|K_B|$ at Bob's side $\frac{1}{|K_A|}\sum_{i=1}^{|K_A|}\ketbra{i}{i}_{K_A}\otimes\frac{1}{|K_B|}\sum_{i=1}^{|K_B|}\ketbra{i}{i}_{K_B}\otimes \psi_{E_{f}}$, where $\psi_{E_{f}}$ is the final reduced state at Eve. Randomness generated in the intermediate measurement is required to be private against the final state of Eve, therefore the measurement can be postponed to the final stage of the protocol. So does for the intermediate partial trace. Denote the operation class composed by the four steps as MCLODCC and we are ready to define the rate of private randomness distillation in the standard picture. 

\begin{definition}
\label{distillable-rand-standard}
A protocol with $\{|K_{A}|, |K_{B}|,\epsilon(n)\}$, which distills private randomness from $n$ copies of $\ket{\psi}_{ABE}$ in the standard form, is an operation $\tilde{\Lambda}_n\in MCLODCC: A^nB^nE^n\to K_{A}K_{B}E^n_{f}$, where $|K_A|$ and $|K_B|$ are the sizes of the randomness parts dependent on $n$ and $\epsilon(n)$ is the error measured by the trace distance from the ideal standard form 
\[
\|\tilde{\Lambda}_n((\ketbra{\psi}{\psi}_{ABE}^{\otimes n})-\rho_{K_{A}K_{B}E^n_{f}}\|_1\le \epsilon(n),
\]
where 
\[
\rho_{K_{A}K_{B}E^n_{f}}\\
=\frac{1}{|K_{A}|}\sum_{i=1}^{|K_{A}|}\ketbra{i}{i}_{K_{A}}\otimes\frac{1}{|K_{B}|}\sum_{i=1}^{|K_{B}|}\ketbra{i}{i}_{K_{B}}\otimes \psi_{E^n_{f}}.
\]
The pair rate $(R_A, R_B)$ achieved by the protocol is defined as $R_A=\lim_{n\to\infty}\frac{\log |K_{A}|}{n}$ and $R_B=\lim_{n\to\infty}\frac{\log |K_{B}|}{n}$, when $\lim_{n\to\infty}\epsilon(n)= 0$.
\end{definition}

Similarly we define the pair rate of private randomness distillation in the dual picture where the goal state is the $\alpha$ state and the allowed operations come from the CLODCC class.

\begin{definition}
\label{distillable-rand-dual}
A protocol with $\{|K_{A}|, |K_{B}|,\epsilon(n)\}$, which distills private randomness from $n$ copies of $\ket{\psi}_{ABE}$ in the dual picture, is an operation $\Lambda_n\in CLODCC: A^nB^n\to K_{A}A'K_{B}B'$, where $|K_A|$ and $|K_B|$ are the sizes of the randomness parts and 
$\epsilon(n)$ is the distance from the $\alpha_{K_{A}A'K_{B}B'}$ state
\[
\|\Lambda_n(\psi_{AB}^{\otimes n})-\alpha_{K_{A}A'K_{B}B'}\|_1\le \epsilon(n),  
\]
and
\begin{align*}
&{\alpha}_{K_{A}A'K_{B}B'}\\
=& \frac{1}{|K_{A}||K_{B}|} \sum_{i,j=1}^{|K_{A}|} \sum_{k,\ell=1}^{|K_{B}|} 
                              \ketbra{i}{j}_{K_{A}}\otimes \ketbra{k}{\ell}_{K_{B}} 
                               \otimes U_{ik} \sigma_{A'B'} U_{j\ell}^{\dagger}.
\end{align*}
The pair rate $(R_A, R_B)$ achieved by the protocol is defined as $R_A=\lim_{n\to\infty}\frac{\log |K_{A}|}{n}$ and $R_B=\lim_{n\to\infty}\frac{\log |K_{B}|}{n}$, when $\lim_{n\to\infty}\epsilon(n)= 0$.
\end{definition}

 \medskip
 
From Definitions \ref{distillable-rand-standard} \ref{distillable-rand-dual} and the proof of Lemma \ref{alpha-state}, we can establish a corollary that extracting private randomness in the two pictures are equivalent.
 \begin{corollary}
 \label{rand-equiv}
 If there exists a protocol to distill private randomness at the pair rate $\lim_{n\to\infty}(\frac{\log |K_{A}|}{n},\frac{\log |K_{B}|}{n})$, getting a   state that approximates the state $\rho_{K_{A}K_{B}E^n_{f}}$ under MCLODCC in the standard picture, then there exists a protocol achieving the same pair rate, getting a state that approximates ${\alpha}_{K_{A}A'K_{B}B'}$ under CLODCC in the dual picture. The converse statement is true as well.  
 \end{corollary}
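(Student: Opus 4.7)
The plan is to leverage the approximate structural characterization of independent-random-bit states established in the proof of Lemma \ref{ibit-form}. That proof establishes (in its approximate form, which is explicitly claimed in the main text: \emph{``getting (approximate) ibits is equivalent to obtaining the state in the exact (approximate) standard form''}) the following two-way statement: a state $\tilde{\alpha}_{K_A A' K_B B'}$ is $\epsilon$-close in trace distance to some ideal $\alpha$-state if and only if, upon measuring $K_A, K_B$ in the computational basis and purifying the shield $A'B'$ into Eve's system, the resulting classical--quantum state on $K_A K_B E_f^n$ is $\epsilon'$-close to the standard form, with $\epsilon$ and $\epsilon'$ polynomially related. One direction of this reduction is monotonicity of the trace distance under the measurement CPTP map; the other combines Uhlmann's theorem (Lemma \ref{Uhlmann}) with Fuchs--van de Graaf (Lemma \ref{FG}).

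For the direction dual $\to$ standard, I would take a CLODCC protocol $\Lambda_n:A^n B^n \to K_A A' K_B B'$ that achieves error $\epsilon(n)\to 0$ against the ideal $\alpha$-state, and compose it with the final computational-basis measurement $M_{K_A}\otimes M_{K_B}$. The composed map $\tilde{\Lambda}_n := (M_{K_A}\otimes M_{K_B})\circ \Lambda_n$ lies in MCLODCC by definition. A direct calculation shows that measuring the ideal $\alpha$-state yields exactly the standard form on $K_A K_B E_f^n$ (since the reduction of $|\alpha\rangle_{K_AK_BA'B'E}$ to Eve is independent of $i,k$ thanks to the unitarity of the $U_{ik}$), so monotonicity of trace distance gives $\tilde{\Lambda}_n$ within $\epsilon(n)$ of the standard form. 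The dimensions $|K_A|,|K_B|$ are preserved, hence the pair rate $(R_A,R_B)$ is realised.

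For the converse standard $\to$ dual, I would first invoke the observation in the main text that in any MCLODCC protocol all measurements and partial traces can be deferred to the final stage, so one may write $\tilde{\Lambda}_n = (M_{K_A}\otimes M_{K_B})\circ \Lambda_n$ with $\Lambda_n \in \text{CLODCC}$. By assumption its output approximates the standard form within $\epsilon(n)\to 0$. The approximate equivalence supplied by Lemma \ref{ibit-form} then produces an $\alpha$-state within some $\epsilon'(n)=f(\epsilon(n))\to 0$ of $\Lambda_n(\psi_{AB}^{\otimes n})$, so $\Lambda_n$ itself is already a valid CLODCC protocol for the dual picture with the same key sizes and hence the same pair rate.

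The main obstacle is the standard $\to$ dual direction, specifically certifying that closeness of the post-measurement reduced state on $K_A K_B E_f^n$ to the standard form actually forces closeness of the pre-measurement state on $K_A A' K_B B'$ to some $\alpha$-state. The former delivers only near-uniformity of the marginal on $K_A K_B$ and near-independence of Eve's conditional state from the outcomes $(i,k)$; to reconstruct the full twisting structure $U_{ik}\sigma_{A'B'}U_{j\ell}^{\dagger}$ on the shield one has to apply Uhlmann's theorem on the $A'B'{:}E$ cut to pull back the equality of Eve-marginals into a family of local unitaries $U_{ik}$. Once Lemma \ref{ibit-form} provides this step quantitatively, the corollary follows immediately.
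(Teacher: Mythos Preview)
Your proposal is correct and follows exactly the route the paper intends: the corollary is stated without a detailed proof, merely as a consequence of Definitions~\ref{distillable-rand-standard}, \ref{distillable-rand-dual} and the robust version of Lemma~\ref{ibit-form}, and you have spelled out precisely how those ingredients combine.

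One small technical point: in the dual~$\to$~standard direction you write that ``monotonicity of trace distance gives $\tilde{\Lambda}_n$ within $\epsilon(n)$ of the standard form.'' This is not quite monotonicity alone. The hypothesis gives closeness on $K_AA'K_BB'$, but the target statement lives on $K_AK_BE_f^n$; to transport the bound across the purifying system you first need Uhlmann's theorem to align the two purifications via a unitary on $E_f^n$, and only then apply monotonicity of the fidelity (or trace distance) under the dephasing-and-trace map. This is exactly the ``converse'' paragraph in the robust Lemma~\ref{ibit-form}, which you cite at the outset, so the argument is sound---just note that both directions invoke Uhlmann, not only the standard~$\to$~dual one.
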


\medskip
\begin{proofthm}{\bf of Lemma \ref{ibit-form}.}
In the usual approach, a quantum state having two independent 
random bits private against Eve is of the form
\be
  \label{product form}
  \rho_{K_AK_BE}=\frac{1}{4}\sum_{i,j=0}^{1} \ketbra{i}{i}_{K_A} \otimes \ketbra{j}{j}_{K_B} \otimes \rho_E.
\ee
Consider a pure state $|\phi\>_{K_AA'K_BB'E}$ whose marginal state $\phi_{K_AK_BE}$ turns 
into $\rho_{K_AK_BE}$ after measuring subsystems $K_A$ and $K_B$ in the computational basis. 
We denote the dephasing map in the computational basis by $\Delta$, so the 
dephased state of $\phi_{K_AK_BE}$ is 
\begin{align*}
  (\Delta_{K_A}\otimes\Delta_{K_B})\phi_{K_AK_BE}:=\\
 \sum_{ij} \ketbra{i}{i}_{K_A}\otimes \ketbra{j}{j}_{K_B}
                                                     \otimes \bra{ij}\phi_{K_AK_BE}\ket{ij}_{K_AK_B}.
\end{align*}
When written in the computational basis, the global state has the form
\be
\label{pure form}
  |\phi\>_{K_AA'K_BB'E}=\frac{1}{2}\sum_{i,j=0}^{1}\ket{i}_{K_A}\ket{j}_{K_B}\otimes \ket{\phi_{ij}}_{A'B'E}.
\ee
The dephased state is 
\[
  \rho_{K_AA'K_BB'E} 
   = \frac{1}{4}\sum_{i,j=0}^{1} \ketbra{i}{i}_{K_A}\otimes \ketbra{j}{j}_{K_B} 
                                 \otimes \ketbra{\phi_{ij}}{\phi_{ij}}_{A'B'E}.
\]
Because $\rho_{K_AK_BE}=\tr_{A'B'}\rho_{K_AA'K_BB'E}$ is of the form Eq. (\ref{product form}), 
$\ket{\phi_{ij}}_{A'B'E}$ satisfy the condition 
$\tr_{A'B'}\ketbra{\phi_{ij}}{\phi_{ij}}_{A'B'E}=\rho_E$ for 
all $i$ and $j$. Then, by the Schmidt decomposition, there exist unitary 
operators $U_{ij}$ acting on $A'B'$ such that the following holds:
\be
  |\phi_{ij}\>_{A'B'E}=U_{ij}|\phi_0\>_{A'B'E},
\ee
where $|\phi_0\>_{A'B'E}$ is a fixed purification for $\rho_E$. Using this 
form and tracing the E part in Eq. (\ref{pure form}), 
we get the state $\alpha_{K_AA'K_BB'}$ of the claimed structure. 
A similar reasoning gives the state $\alpha_{K_AA'B'}$ 
when random bit is only on one side. 

The robust version of Lemma \ref{ibit-form} is also true in the following sense,
which we state directly for general amounts of randomness for Alice and Bob:
Let, for positive integers $a$ and $b$,
\begin{align}
  \label{eq:random}
  &\rho_{K_AK_BE}      = \frac{1}{ab}\sum_{i=0}^{a-1}\sum_{j=0}^{b-1} 
                                    \ketbra{i}{i}_{K_A} \otimes \ketbra{j}{j}_{K_B} \otimes \rho_E, \\
  \label{eq:alpha}
  &\alpha_{K_AA'K_BB'} = \frac{1}{ab}\sum_{i,j=0}^{a-1}\sum_{k,\ell=0}^{b-1}
                                     \ketbra{i}{j}_{K_A} \otimes \ketbra{k}{\ell}_{K_B} 
                                      \otimes U_{ik} \sigma_{A'B'} U_{j\ell}^\dagger,
\end{align}
the ideal private randomness state and the ideal idit, respectively.
They descend from the common pure state
\be\label{pure}
  |\phi\>_{K_AA'K_BB'E} = \frac{1}{\sqrt{ab}}
                      \sum_{i=0}^{a-1}\sum_{j=0}^{b-1} \ket{i}_{K_A}\ket{j}_{K_B}
                                                       \otimes U_{ij}\ket{\phi_0}_{A'B'E},
\ee
with a suitable purification $\ket{\phi_0}$ of $\rho_E$, by which we mean 
$(\Delta_{K_A}\otimes\Delta_{K_B})\phi_{K_AK_BE}=\rho_{K_AK_BE}$, and $|\phi\>_{K_AA'K_BB'E}$ 
is a purification of $\alpha_{K_AA'K_BB'}$: $\tr_E \proj{\phi} = \alpha_{K_AA'K_BB'}$.

\begin{quote}
  If a ccq-state $\widetilde{\rho}_{K_AK_BE}$ is such that 
  $\frac12\|\widetilde{\rho}_{K_AK_BE} - \rho_{K_AK_BE}\|_1\leq\epsilon$,
  then for any pure state $\ket{\tilde\phi}_{K_AA'K_BB'E}$ satisfying 
  $(\Delta_{K_A}\otimes\Delta_{K_B})\tilde{\phi}_{K_AK_BE}=\tilde{\rho}_{K_AK_BE}$,
  \(
    \frac12\|\tilde\phi_{K_AA'K_BB'} - \alpha_{K_AA'K_BB'}\|_1 
              \leq \delta =\sqrt{f(\epsilon)(2-f(\epsilon))}
  \)
  with $f(\epsilon)\to 0$ as $\epsilon\to 0$, for an $\alpha$ of the form 
  (\ref{eq:alpha}) with suitable $\sigma_{A'B'}$ and unitaries $U_{ik}$ acting on $A'B'$.
  
  Conversely, if $\widetilde{\alpha}_{K_AA'K_BB'}$ is such that
  $\frac12\|\widetilde{\alpha}_{K_AA'K_BB'} - \alpha_{K_AA'K_BB'}\|_1\leq\epsilon$,
  then for any purification $\ket{\tilde\phi}_{K_AA'K_BB'E}$ of $\widetilde{\alpha}$,
  \(
    \frac12\|(\Delta_{K_A}\otimes\Delta_{K_B})\tilde\phi_{K_AK_BE} - \rho_{K_AK_BE}\|_1 
                                                \leq \delta=\sqrt{\epsilon(2-\epsilon)},
  \)
  for a $\rho$ of the form (\ref{eq:random}) with a suitable $\rho_E$.
\end{quote}

For the forward direction, suppose the ccq-state 
$\widetilde{\rho}_{K_AK_BE}=\sum_{ij}p_{ij}\ketbra{i}{i}_{K_A}\otimes \ketbra{j}{j}_{K_B}\otimes \tilde{\rho}_E^{ij}$ 
satisfies $\frac12\|\widetilde{\rho}_{K_AK_BE} - \rho_{K_AK_BE}\|_1\leq\epsilon$. 
We have a pure state 
\[
  \ket{\tilde{\phi}}_{K_AA'K_BB'E}=\sum_{ij}\sqrt{p_{ij}}\ket{i}_{K_A}\ket{j}_{K_B}\ket{\tilde{\phi}_{ij}}_{A'B'E'}
\] 
such that $(\Delta_{K_A}\otimes\Delta_{K_B})\tilde{\phi}_{K_AK_BE}=\widetilde{\rho}_{K_AK_BE}$, 
where $\ket{\tilde{\phi}_{ij}}_{A'B'E}$ are the purifications of $\tilde{\rho}_E^{ij}$. 
Denote a fixed purification of $\rho_E$ as $\ket{\phi_0}_{A'B'E}$. 
From Uhlmann's theorem \cite{Uhlmann}, we can choose unitaries $U_{ij}$ on $A'B'$ 
satisfying $\bra{\tilde{\phi}_{ij}}U_{ij}\ket{\phi_0}=F(\tilde{\rho}_E^{ij},\rho_E)$,
and form the state $\ket{\phi}_{K_AA'K_BB'E}$ in Eq. (\ref{pure}), which is clearly 
the purification for some $\alpha_{K_AA'K_BB'}$. We claim that $\ket{\tilde{\phi}}_{K_AA'K_BB'E}$ 
is close to $\ket{\phi}_{K_AA'K_BB'E}$ that is sufficient to prove $\tilde{\phi}_{K_AA'K_BB'}$ 
is close to $\alpha_{K_AA'K_BB'}$. 

By the triangle inequality for the trace norm and its monotonicity under 
partial trace, we get
\begin{align*}
\sum_{ij} &\,p_{ij}\|\tilde{\rho}_{E}^{ij}-\rho_E\|_1\\
&=\left\|\sum_{ij} p_{ij}\ketbra{ij}{ij}\otimes\tilde{\rho}_{E}^{ij}-\sum_{ij} p_{ij}\ketbra{ij}{ij}\otimes\rho_E\right\|_1\\
&\le\left\|\sum_{ij} p_{ij}\ketbra{ij}{ij}\otimes\tilde{\rho}_{E}^{ij}-\frac{1}{ab}\sum_{ij}\ketbra{ij}{ij}\otimes\rho_E\right\|_1\\
&\phantom{=}
 +\left\|\frac{1}{ab}\sum_{ij}\ketbra{ij}{ij}\otimes\rho_E-\sum_{ij} p_{ij}\ketbra{ij}{ij}\otimes\rho_{E}\right\|_1
\!\!\le 4\epsilon.
\end{align*}
From Lemma \ref{markov}, there exists a good index set 
$K=\left\{ij:\|\tilde{\rho}_E^{ij}-\rho_E\|_1\le\sqrt{4\epsilon}\right\}$, 
with $\sum_{ij\in K} p_{ij}\ge 1-\sqrt{4\epsilon}$. We denote $\bar{K}$ as the complement of $K$.
From Lemma \ref{FG}, we have $\bra{\tilde{\phi}_{ij}}U_{ij}\ket{\phi_0}\ge 1-\sqrt{\epsilon}$ for every $ij\in K$
and $\sum_{ij\in K}\sqrt{p_{ij}}\sqrt{\frac{1}{ab}}\ge 1-\epsilon$. Then we compute the fidelity,
\begin{align}
  \braket{\tilde{\phi}}{\phi}&_{K_AA'K_BB'E}
   =    \sum_{ij}\sqrt{p_{ij}}\sqrt{\frac{1}{ab}}\bra{\tilde{\phi}_{ij}}U_{ij}\ket{\phi_0} \\
  &\geq (1-\sqrt{\epsilon})\sum_{ij\in K}\sqrt{p_{ij}}\sqrt{\frac{1}{ab}}      \label{sum} \\
  &=    (1\!-\!\sqrt{\epsilon})\left(\sum_{ij} \!\sqrt{p_{ij}}\sqrt{\frac{1}{ab}}
                                 \!-\!\sum_{ij\in \bar{K}} \!\sqrt{p_{ij}}\sqrt{\frac{1}{ab}}\right) \\
  &\geq (1-\sqrt{\epsilon})\left[(1-\epsilon)
                          -\frac{1}{2}\sum_{ij\in \bar{K}}\left({p_{ij}}
                                                          +{\frac{1}{ab}}\right)\right] \label{square}\\
  &\geq (1\!-\!\sqrt{\epsilon})\left[(1\!-\!\epsilon)
              \!-\!\frac{1}{2}\sum_{ij\in \bar{K}}\left(\left|{p_{ij}}-{\frac{1}{ab}}\right|
                                                             +2p_{ij}\!\right)\!\right] \label{absvalue}\\
  &\geq (1-\sqrt{\epsilon})\left[(1-\epsilon)-\epsilon-\sqrt{4\epsilon}\right]          \label{partial} \\
  &=    (1-\sqrt{\epsilon})(1-2\epsilon-2\sqrt{\epsilon})=:1-f(\epsilon),
\end{align}
where Ineq. (\ref{sum}) comes from $\bra{\tilde{\phi}_{ij}}U_{ij}\ket{\phi_0}\ge 0$ 
and $\bra{\tilde{\phi}_{ij}}U_{ij}\ket{\phi_0}\ge 1-\sqrt{\epsilon}$ for $ij\in K$, (\ref{square}) from $\sqrt{x}\sqrt{y}\le\frac12(x+y)$, (\ref{absvalue}) from $x-y\le|x-y|$, and (\ref{partial}) from $\frac{1}{2}\sum_{ij\in \bar{K}}|{p_{ij}}-{\frac{1}{ab}}|\le\frac{1}{2}\sum_{ij}|{p_{ij}}-{\frac{1}{ab}}|\le\epsilon$ and $\sum_{ij\in K} p_{ij}\ge 1-\sqrt{4\epsilon}$. 
Then we get, by monotonicity of the fidelity, that
\[
  F(\tilde{\phi}_{K_AA'K_BB'},\alpha_{K_AA'K_BB'}) \ge 1-f(\epsilon),
\]
and so
\[
  \frac12 \|\tilde{\phi}_{K_AA'K_BB'}-\alpha_{K_AA'K_BB'}\|_1 
                             \leq \sqrt{f(\epsilon)(2-f(\epsilon))} =: \delta.
\]
                             
For the opposite direction, we proceed in a simpler way,
namely by Uhlmann's theorem we have that there exists a unitary $V_E$
such that
\[
  \bigl|\bra{\tilde{\phi}}\1\otimes V\ket{\phi}\bigr| 
      = F(\widetilde{\alpha}_{K_AA'K_BB'},\alpha_{K_AA'K_BB'})\geq 1-\epsilon.
\]
Then we get, by monotonicity of the fidelity, that
\[\begin{split}
&F\bigl((\Delta_{K_A}\otimes\Delta_{K_B})\tilde{\phi}_{K_AK_BE},
       (\Delta_{K_A}\otimes\Delta_{K_B})(V{\phi}_{K_AK_BE}V^{\dagger})\bigr) \\
      &\ge 1-\epsilon,
\end{split}\]
and so
\[\begin{split}
  &\frac12 \|(\Delta_{K_A}\otimes\Delta_{K_B})\tilde{\phi}_{K_AK_BE}
             -(\Delta_{K_A}\otimes\Delta_{K_B})(V{\phi}_{K_AK_BE}V^{\dagger})\|_1 \\
                                            &\leq \sqrt{\epsilon(2-\epsilon)} =: \delta.
\end{split}\]
As before, $(\Delta_{K_A}\otimes\Delta_{K_B})(V{\phi}_{K_AK_BE}V^{\dagger})$ has 
the same structure as $\rho_{K_AK_BE}$.
\end{proofthm}

\medskip
{\theorem \label{two-decouple} 
Given $\epsilon,\delta >0$, for $n$
copies of a pure tripartite state $\ket{\psi}_{ABE}$ where $n$ is large, there exists a 
unitary $U:A^n\to KA'$ with a fixed basis $\{\ket{i}\}$ 
of subsystem $K$, $(U_{A^n}\otimes \1_{B^nE^n})\ket{\psi}_{ABE}^{\otimes n}
   = \sum_{i=1}^{|K|} \sqrt{p_i} \ket{i}_K\ket{\psi_i}_{A'B^nE^n}$, such that after measurement on $K$ in the fixed basis,
\begin{align*}
 \left\| \sum_{i=1}^{|K|}p_i\ketbra{i}{i}_{K}\otimes\psi^i_{B^n}
          - \frac{1}{|K|}\sum_{i=1}^{|K|} \ketbra{i}{i}_{K}\otimes \psi_{B}^{\otimes n} \right\|_1
  \le \epsilon, \\
  \left\| \sum_{i=1}^{|K|}p_i\ketbra{i}{i}_{K}\otimes\psi^i_{E^n}
          - \frac{1}{|K|}\sum_{i=1}^{|K|}\ketbra{i}{i}_{K}\otimes\psi_{E}^{\otimes n} \right\|_1
  \le \epsilon,
\end{align*}
when $\frac{1}{n} \log|K| = \min\{I(A:E)_{\psi}, I(A:B)_{\psi}\}-\delta$.
}
\medskip

\begin{proof}
First we prove a one-shot version of Theorem \ref{two-decouple}.
Namely, for a pure tripartite state $|\psi\>_{ABE}$, there exists a unitary 
$U:A\to KA'$ with  a fixed standard basis $\{\ket{i}\}$ of subsystem $K$, 
such that the state of $K$ after the von Neumann measurement simultaneously 
decouples from Bob and Eve, as follows:
\begin{equation}\begin{split}
  \label{AE}
  &\left\| \sum_{i=1}^{|K|}p_i\ketbra{i}{i}_{K}\otimes\psi^i_{B}
            - \frac{1}{|K|}\sum_{i=1}^{|K|}\ketbra{i}{i}_{K}\otimes\psi_{B} \right\|_1 \\
  &\phantom{=====}
   \le 2(2^{-\frac{1}{2}[H^{\epsilon}_{\min}(A|B)_{\psi}+H^{\epsilon}_{\min}(R|K)_{\tau}]}+12\epsilon),
\end{split}\end{equation}
\begin{equation}\begin{split}
  \label{AB}
  &\left\| \sum_{i=1}^{|K|}p_i\ketbra{i}{i}_{K}\otimes\psi^i_{E}
            - \frac{1}{|K|}\sum_{i=1}^{|K|}\ketbra{i}{i}_{K}\otimes\psi_{E} \right\|_1 \\
  &\phantom{=====}
   \le 2(2^{-\frac{1}{2}[H^{\epsilon}_{\min}(A|E)_{\psi}+H^{\epsilon}_{\min}(R|K)_{\tau}]}+12\epsilon), \end{split}\end{equation}
where $(U_A\otimes \1_{BE})\ket{\psi}_{ABE}=\sum_{i=1}^{|K|}\sqrt{p_i}\ket{i}_K\ket{\psi_i}_{A'BE}$
and $\tau_{RK}=(\id_R\otimes T)\Phi_{RA}$ is the Choi-Jamio\l{}kowski state of the 
CPTP map
\[
  T_{A\to K}(\rho)=\sum_i\ketbra{i}{i}(\tr_{A'}\rho)\ketbra{i}{i},
\]
with $A=A'K$, that is tracing $A'$ first and then performing a von 
Neumann measurement on $K$. Notice that $\tau_K=\frac{\1_K}{|K|}$. 
Now borrowing the idea from \cite{Ye}, and Lemma \ref{Dupuis}, it is 
easy to see that more than half of the unitary operators $U$ satisfy 
Ineq. (\ref{AE}) and more than half satisfy Ineq. (\ref{AB}). 
So there must exist a common unitary such that both inequalities are satisfied.  

Now, by the asymptotic equipartition property, the $\epsilon$-smooth 
conditional min-entropy reduces to conditional entropy in the asymptotic 
i.i.d. setting, and
$S(A|E)_{\psi^{\otimes n}} = n(S(B)-S(E))$, 
$S(A|B)_{\psi^{\otimes n}}=n(S(E)-S(B))$, as well as
$S(R|K)_{\tau}=nS(A)-\log|K|$ when we restrict to the typical subspace of 
$\psi_A^{\otimes n}$. Counting the rates, we get that if 
$\frac{1}{n}\log|K| = \min\{I(A:B),I(A:E)\}-\delta$, then the first term 
will decay exponentially with $n$. So we can get randomness in $K$ simultaneously 
independent of Bob's and of Eve's system. 
\end{proof}

\medskip\noindent
\textbf{Remark}\ \,Theorem \ref{two-decouple} can be 
generalized to multipartite states.


{\theorem \label{security}
Given $\epsilon, \delta>0$ and sufficiently large $n,m$, for a state 
$\ket{\psi}_{ABE}$ satisfying $S(A|B)>0>S(B|A)$, there exist local 
unitaries $U:A^{n+m}\to K_AA'$ and $V:B^{n+m}\to K_BB'$, 
and fixed bases $\{\ket{i}\}_{K_A}$  and $\{\ket{j}\}_{K_B}$,
\[
  U\otimes V\ket{\psi}_{ABE}^{\otimes (n+m)}
   =\! 
   \sum_{i,j=1}^{|K_A|,|K_B|}\!\! \sqrt{p_{ij}} \ket{i}_{K_A} \ket{j}_{K_B}\ket{\psi_{ij}}_{A'B'E^nE^m},
\]
so that after measurements on $K_A, ~K_B$ in the fixed bases,
\begin{align*}
 &\left\| \sum_{i,j} \! p_{ij}\ketbra{ij}{ij}_{K_AK_B}\!\otimes\!\psi^{ij}_{E^nE^m}
          \!\!-\! \frac{\1_{K_A}}{|K_A|}\!\otimes\!\frac{\1_{K_B}}{|K_B|}
                \!\otimes\!\psi_{E}^{\otimes (n+m)}\right\|_1 \!\!\! \\
    \le &f'(\epsilon):=2\epsilon+5\sqrt{2\epsilon}+2\sqrt{4\sqrt{2\epsilon}-2\epsilon}, 
\end{align*}
assuming $\log|K_A|=nI(A:B)-n\delta$ and $\log|K_B|=\min\{mI(B:A),mI(B:E)+2nS(B)\}-m\delta$. By choosing $\frac{m}{n}=-\frac{S(B|A)}{S(B)}$, $\log|K_B|=mI(A:B)-m\delta$.
}

\medskip
Here $S(A|B)>0>S(B|A)$ implies that $I(A:E)>I(A:B)>I(B:E)$. From Theorem \ref{two-decouple}, we know that Alice can distill private randomness at the rate $I(A:B)$ but Bob cannot achieve the rate $I(A:B)$ directly. In the proof of Theorem \ref{two-side}, we will see that $I(A:B)$ is the optimal rate at which private randomness can be distilled from the information part. To that end, Bob will make use of the local noise created from Alice's randomness distillation. 

\medskip

\begin{proof}
Essentially Theorem \ref{security} says that Theorem \ref{two-decouple} is 
composable in the following way. Suppose $n$ copies of $\ket{\psi}_{ABE}$ 
and $m$ copies of $\ket{\psi}_{ABE}$ are given. 
From Theorem \ref{two-decouple}, we know that Alice can get $nI(A:B)_{\psi}$
bits of randomness in $K_A$ from 
$\ket{\psi}_{ABE}^{\otimes n}$. Notice that Bob could virtually regard his 
state $\psi_B^{\otimes n}$ as local noise in the sense that it is entangled 
with $A'E$ even assuming $A'$ is at Eve's control which is the worst case 
(but of course Eve does not really have access to $A'$ whose state may 
have information on the randomness). And he can make use of this noise 
to help extract randomness with the fresh state $\ket{\psi}_{ABE}^{\otimes m}$, 
where the randomness in subsystem $K_B$ will decouple from the system $A'E$. 
All together, the randomness in $K_A$ and $K_B$ is independent and secure 
against Eve. 

The first step is for Alice to extract randomness from $\ket{\psi}_{ABE}^{\otimes n}$. 
From Theorem \ref{two-decouple}, we know that there exists 
a unitary $U:A^n\to K_AA'$ such that 
\ben
U\ket{\psi}_{ABE}^{\otimes n}
   = \sum_{i=1}^{|K_A|} \sqrt{p_{i}} \ket{i}_{K_A} \ket{\psi_{i}}_{A'B^nE^n},
\een
satisfying 
\ben
\left\| \sum_{i=1}^{|K_A|}p_i\ketbra{i}{i}_{K_A}\otimes\psi^i_{B^n}
          - \frac{1}{|K_A|}\sum_{i=1}^{|K_A|} \ketbra{i}{i}_{K_A}\otimes \psi_{B}^{\otimes n} \right\|_1
  \le \epsilon,\nonumber\\ 
  \left\| \sum_{i=1}^{|K_A|}p_i\ketbra{i}{i}_{K_A}\otimes\psi^i_{E^n}
          - \frac{1}{|K_A|}\sum_{i=1}^{|K_A|}\ketbra{i}{i}_{K_A}\otimes\psi_{E}^{\otimes n} \right\|_1
  \le \epsilon, \nonumber
\een
when $\log|K_A|=n[I(A:B)-\delta]$. 
By the monotonicity of trace norm, we have 
\[
 \sum_{i=1}^{|K_A|}p_i\ketbra{i}{i}_{K_A}\stackrel{\epsilon}{\approx}
          \frac{1}{|K_A|}\sum_{i=1}^{|K_A|} \ketbra{i}{i}_{K_A} .
\]
By triangle inequality of the trace norm, we then get
\[
 \frac{1}{|K_A|}\sum_{i=1}^{|K_A|} \ketbra{i}{i}_{K_A}\otimes \psi^i_{B^n}\stackrel{2\epsilon}{\approx}
 \frac{1}{|K_A|}\sum_{i=1}^{|K_A|} \ketbra{i}{i}_{K_A}\otimes \psi_{B}^{\otimes n}.
\]
From Lemma \ref{markov}, we have 
\be
  \sum_{i: \psi^i_{B^n}\stackrel{\sqrt{2\epsilon}}{\approx}\psi_{B}^{\otimes n}}
                                               \frac{1}{|K_A|}\ge 1-\sqrt{2\epsilon}.
\ee
Similarly we have
\be
  \sum_{i: \psi^i_{E^n}\stackrel{\sqrt{2\epsilon}}{\approx}\psi_{E}^{\otimes n}}
                                               \frac{1}{|K_A|}\ge 1-\sqrt{2\epsilon}.
\ee
For both to be true, we have
\be
\sum_{\widetilde{K}_A}\frac{1}{|K_A|}\ge 1-2\sqrt{2\epsilon},
\ee
where 
$\widetilde{K}_A:=\left\{i: \psi^i_{B^n}\stackrel{\sqrt{2\epsilon}}{\approx}\psi_{B}^{\otimes n}
                        \ \&\ 
                        \psi^i_{E^n}\stackrel{\sqrt{2\epsilon}}{\approx}\psi_{E}^{\otimes n}\right\}$.
Denoting $R_A=[K_A]\setminus\widetilde{K}_A$, we can write
\begin{align*}
\frac{1}{|K_A|}\sum_{i=1}^{|K_A|} \ketbra{i}{i}_{K_A}&\otimes \psi^i_{A'B^nE^n}       \\
 &= \frac{1}{|K_A|}\sum_{i\in\tilde{K}_A}\ketbra{i}{i}_{K_A}\otimes \psi^i_{A'B^nE^n} \\
 &\phantom{=}
    + \frac{1}{|K_A|}\sum_{i\in R_A}\ketbra{i}{i}_{K_A}\otimes \psi^i_{A'B^nE^n}.
\end{align*}

The next step is for Bob to extract randomness by applying a unitary 
$V: B^{(n+m)}\to K_BB'$ to the state 
$\ket{\phi}_{B^n(A'E^{n})}\otimes\ket{\psi}_{ABE}^{\otimes m}$, 
where $A'E^n$ is virtually at Eve's side and $\ket{\phi}_{B^n(A'E^{n})}$ 
is any state satisfying $\phi_{B^n}=\psi_{B}^{\otimes n}$. We assume a unitary $W: A'E^n\to E_v^n$ such that $W\ket{\phi}_{B^n(A'E^{n})}=\ket{\phi}_{BE_{v}}^{\otimes n}$. From Theorem \ref{two-decouple}, we have
\[
V\ket{\phi}_{BE_v}^{\otimes n}\otimes\ket{\psi}_{ABE}^{\otimes m}=\sum_{j=1}^{|K_B|}\sqrt{q_j}\ket{j}_{K_B}\ket{\phi_j}_{B'A^mE_v^nE^m},
\]
satisfying
\begin{align}
&\left\| \sum_{j=1}^{|K_B|}q_j\ketbra{j}{j}_{K_B}\!\otimes\phi^j_{A^m}
         \!-\! \frac{1}{|K_B|}\sum_{j=1}^{|K_B|}\ketbra{j}{j}_{K_B}\!\otimes\psi_{A}^{\otimes m} \right\|_1 
         \!\!\!\le \epsilon,\\ 
&\left\| \sum_{j=1}^{|K_B|}q_j\ketbra{j}{j}_{K_B}\otimes\phi^j_{E_v^nE^m}\right. \nonumber\\ 
&\phantom{=====}
 \left. -\sum_{j=1}^{|K_B|}\!\frac{1}{|K_B|}\ketbra{j}{j}_{K_B}
                                     \!\otimes\!\phi_{E_v}^{\otimes n}\!\otimes\!\psi_{E}^{\otimes n} \right\|_1 
  \!\!\le \epsilon, \label{B-decouple-E}
\end{align}
when $\log|K_B|=\min\{mI(B:A), mI(B:E)+2nS(B)\}-m\delta$. Here the additional term $2nS(B)$ comes from the mutual information between $B^n$ and $E_v^n$ in the state of noise $\ket{\phi}_{BE_v}^{\otimes n}$. Applying $W^{\dagger}: E_v^n\to A'E^n$ to reverse the action of $W$ does not change the above relations but recover states on $A'E^n$. Thus Theorem \ref{two-decouple} can be applied though $\ket{\phi}_{B^n(A'E^{n})}$ is not of product form. Notice that $V$ is determined entirely by Bob's local states $\phi_{B^n}=\psi_B^{\otimes n}$ and 
$\ket{\psi}_{ABE}^{\otimes m}$, and is independent of the local state $\phi_{A'E^{n}}$. 
Thus Bob can apply $V$ blindly to all states of the form 
$\ket{\psi_i}_{A'B^nE^n}\otimes \ket{\psi}_{ABE}^{\otimes m}$. 

Suppose now that
\[
V\ket{\psi_i}_{B^n(A'E^{n})}\otimes\ket{\psi}_{ABE}^{\otimes m}=\sum_{j=1}^{|K_B|}\sqrt{q_{j|i}}\ket{j}_{K_B}\ket{\psi_{j|i}}_{B'A^mA'E^nE^m}.
\]
For $i\in\tilde{K}_A$, we have 
$\psi^i_{B^n}\stackrel{\sqrt{2\epsilon}}{\approx}\psi_{B}^{\otimes n}=\phi_{B^n}$. 
Using Uhlmann's theorem (precisely, first using Lemma \ref{FG}, then Lemma \ref{Uhlmann}),  
there exist unitary operators $W_i$ acting on $A'E^n$ such that the purification 
$W_i(\phi_{B^n(A'E^n)})W_i^{\dagger}=\phi^i_{B^n(A'E^n)}\stackrel{\epsilon'}{\approx}\psi^i_{B^n(A'E^n)}$ 
with $\phi^i_{B^n}=\psi_B^{\otimes n}$ and 
$\phi^i_{A'E^n}\stackrel{\epsilon'}{\approx}\psi^i_{A'E^n}$, where 
$\epsilon'=\sqrt{4\sqrt{2\epsilon}-2\epsilon}$. Then we have
\be
\label{approx}
V\ket{\psi_i}_{B^n(A'E^{n})}\otimes\ket{\psi}_{ABE}^{\otimes m}\stackrel{\epsilon'}{\approx}V\otimes W_i\ket{\phi}_{B^n(A'E^{n})}\otimes\ket{\psi}_{ABE}^{\otimes m}.
\ee
From Eq. (\ref{approx}), we get
\begin{align}
\sum_{j}q_{j|i}&\ketbra{j}{j}_{K_B}\otimes \psi^{j|i}_{A'E^{n}E^{m}} \nonumber\\
 &\stackrel{\epsilon'}{\approx} \sum_{j}q_j\ketbra{j}{j}_{K_B}\otimes W_i\phi^{j}_{A'E^{n}E^{m}}W^{\dagger}_i \label{one}\\
 &\stackrel{\epsilon}{\approx} \frac{1}{|K_B|}\sum_{j}\ketbra{j}{j}_{K_B}\otimes W_i\phi_{A'E^{n}}W^{\dagger}_i\otimes\psi_{E}^{\otimes m} \label{two}\\
 &\stackrel{\epsilon'}{\approx} \frac{1}{|K_B|}\sum_{j}\ketbra{j}{j}_{K_B}\otimes \psi^{i}_{A'E^{n}}\otimes\psi_{E}^{\otimes m}, \label{three}
\end{align}
where Eq. (\ref{one}) comes from Eq. (\ref{approx}) and the monotonicity of trace norm 
under the dephasing operation on $K_B$ and tracing out $A^m$, 
Eq. (\ref{two}) from (\ref{B-decouple-E}), and Eq. (\ref{three}) from 
$W_i\phi_{A'E^{n}}W^{\dagger}_i=\phi^i_{A'E^n}\stackrel{\epsilon'}{\approx}\psi^i_{A'E^n}$.
Summing the errors, we have for $i\in\tilde{K}_A$
\begin{align}
\sum_{j}q_{j|i}&\ketbra{j}{j}_{K_B}\otimes \psi^{j|i}_{A'E^{n}E^{m}}\nonumber\\
& \stackrel{2\epsilon'+\epsilon}{\approx} \frac{1}{|K_B|}\sum_{j}\ketbra{j}{j}_{K_B}\otimes \psi^{i}_{A'E^{n}}\otimes\psi_{E}^{\otimes m}.
\end{align}

For $i\in\tilde{K}_A$, we also have 
$\psi^i_{E^n}\stackrel{\sqrt{2\epsilon}}{\approx}\psi_{E}^{\otimes n}$, 
and by monotonicity of the trace norm under tracing out $A'$,
\begin{align}
\sum_{j}q_{j|i}&\ketbra{j}{j}_{K_B}\otimes \psi^{j|i}_{E^{n}E^{m}} \nonumber\\
&\stackrel{2\epsilon'+\epsilon}{\approx}\frac{1}{|K_B|}\sum_{j}\ketbra{j}{j}_{K_B}\otimes \psi^{i}_{E^{n}}\otimes\psi_{E}^{\otimes m}\\
&\stackrel{\sqrt{2\epsilon}}{\approx} \frac{1}{|K_B|}\sum_{j}\ketbra{j}{j}_{K_B}\otimes \psi_{E}^{\otimes n}\otimes\psi_{E}^{\otimes m}\\
&=\frac{1}{|K_B|}\sum_{j}\ketbra{j}{j}_{K_B}\otimes \psi_{E}^{\otimes (m+n)}.
\end{align}
Summing the errors, we get
\begin{align}
\sum_{j}q_{j|i}&\ketbra{j}{j}_{K_B}\otimes \psi^{j|i}_{E^{n}E^{m}} \nonumber\\
&\stackrel{\epsilon''}{\approx}\frac{1}{|K_B|}\sum_{j}\ketbra{j}{j}_{K_B}\otimes \psi_{E}^{\otimes (m+n)},
\end{align}
where $\epsilon'':=2\epsilon'+\epsilon+\sqrt{2\epsilon}$.

Now combining the two steps and considering the dephasing state, we get
\begin{align}
\sum_{i\in {K}_{A}}&{p}_i \ketbra{i}{i}_{{K}_{A}} \otimes 
        \sum_{j\in {K}_B} {{q}_{j|i}} \ketbra{j}{j}_{{K}_B}\otimes\psi^{j|i}_{E^{n}E^{m}} \nonumber\\
&\stackrel{\epsilon}{\approx}  
 \frac{1}{|{K}_{A}|}\sum_{i\in {K}_{A}} \!\ketbra{i}{i}_{{K}_{A}}\!\otimes\!
             \sum_{j\in {K}_B} \!{{q}_{j|i}} \ketbra{j}{j}_{{K}_B}\!\otimes\!\psi^{j|i}_{E^{n}E^{m}} \\
&=
 \frac{1}{|{K}_{A}|}\sum_{i\in \tilde{K}_{A}} \!\ketbra{i}{i}_{{K}_{A}}\!\otimes\!
           \sum_{j\in {K}_B} \!{{q}_{j|i}} \ketbra{j}{j}_{{K}_B}\!\otimes\!\psi^{j|i}_{E^{n}E^{m}}\nonumber\\
&\phantom{=}
 +\frac{1}{|{K}_{A}|}\sum_{i\in {R}_{A}} \!\ketbra{i}{i}_{{K}_{A}}\!\otimes\!
    \sum_{j\in {K}_B} {{q}_{j|i}} \!\ketbra{j}{j}_{{K}_B}\!\otimes\!\psi^{j|i}_{E^{n}E^{m}} \\
&\stackrel{\epsilon''}{\approx} 
 \frac{1}{|{K}_{A}|}\sum_{i\in \tilde{K}_{A}} \!\ketbra{i}{i}_{{K}_{A}}\!\otimes\!
 \frac{1}{|K_B|}\sum_{j} \!\ketbra{j}{j}_{K_B}\!\otimes\!\psi_{E}^{\otimes (n+m)} \nonumber\\
&\phantom{=}
 +\frac{1}{|{K}_{A}|}\sum_{i\in {R}_{A}}\!\ketbra{i}{i}_{{K}_{A}}\!\otimes\!
  \sum_{j\in {K}_B} {{q}_{j|i}}\!\ketbra{j}{j}_{{K}_B}\!\otimes\!\psi^{j|i}_{E^{n}E^{m}} \\
&\stackrel{4\sqrt{2\epsilon}}{\approx}\!
 \frac{1}{|{K}_{A}|}\sum_{i\in {K}_{A}} \!\ketbra{i}{i}_{{K}_{A}}\!\otimes\!
 \frac{1}{|K_B|}\sum_{j}\!\ketbra{j}{j}_{K_B}\!\otimes\!\psi_{E}^{\otimes (n+m)},\label{bad}
\end{align}
where Eq. (\ref{bad}) comes from $\sum_{i\in {R}_{A}}\frac{1}{|{K}_{A}|}\le 2\sqrt{2\epsilon}$ 
and $\|\rho_1-\rho_2\|_1 \le 2$ for any two states.
We sum up all the errors and get
\begin{align}
&\left\| \sum_{i\in {K}_{A}} {{p}_i}\ketbra{i}{i}_{{K}_{A}}\otimes 
         \sum_{j\in {K}_B} {{q}_{j|i}}\ketbra{j}{j}_{{K}_B}\otimes\psi^{j|i}_{E^{n}E^{m}}\right.\nonumber\\ 
&\phantom{=}
 \left. -\frac{1}{|K_A|}\sum_{i\in K_A} \!\ketbra{i}{i}_{K_A}\!\otimes\! 
         \frac{1}{|K_B|}\sum_{j\in K_B} \!\ketbra{j}{j}_{K_B}\!\otimes\!\psi_{E}^{\otimes (n+m)} \right\|_1  \nonumber\\
&\phantom{====}
 \leq  2\epsilon+5\sqrt{2\epsilon}+2\sqrt{4\sqrt{2\epsilon}-2\epsilon}=:f'(\epsilon).
\end{align}
The protocol can be composed iteratively in a new round by tackling the indices of $K_AK_B$ together. Notice that only the existence of the good index set is required in the proof, which is used to construct the ideal control state. Alice and Bob need not know how to select the good index set because the unitary operation, as explained by argument below Eq. (\ref{B-decouple-E}), depends only on the average local state. Thus they can apply the decoupling unitary locally such that the dephasing state on the randomness subsystem will approximate the ideal randomness state. Indeed iterating the protocol in finite rounds renders the security of the ``ping-pong'' protocol in the proof of Theorem \ref{two-side}.
\end{proof}

\medskip
\begin{proofthm}{\bf of Theorem \ref{two-side}.}
We first prove it for the setting 1) of no noise and no communication. 
The other three settings can be reduced to setting 1), making
a couple of simple observations.

The proof for setting 1) proceeds by the analysis of four cases. A common feature is that Alice and Bob preprocess their states by Lemma \ref{preprocess}, separating the purity parts that can be exchanged when dephasing channels are allowed, and the information parts that contain 
the correlation. The private randomness contributed from purity part is equal to the purity. Then we only need to deal with the information part and apply iteratively Theorem \ref{two-decouple} for the special state $\psi_{A_IB_IE}$ satisfying the conditions $S(B_I)=\log|B_I|$ and $S(A_I)=\log|A_I|$. When $I(A:B)\le I(A:E)$ or $I(A:B)\le I(B:E)$, we can distill private randomness at the rate $I(A:B)$ without need of local noise. Otherwise, we will employ or create local noise such that the rate is still equal to $I(A:B)$ by noticing that local noise implies increasing $I(A:E)$ or $I(B:E)$ but keeping $I(A:B)$ invariant. Then we need Theorem \ref{security} to guarantee that the randomness generated at both sides is independent.

\emph{Case 1. $S(A|E)<0$ and $S(B|E)< 0$}: By the duality relation between conditional entropies of a pure tripartite state $\ket{\psi}_{ABE}$, that is $S(A|E)+S(A|B)=0$ and $S(B|E)+S(B|A)=0$, 
The claimed region is
\begin{align}
  R_A       &\leq \log|A| - S(A|B), \nonumber\\
  R_B       &\leq \log|B| - S(B|A), \\
  R_A + R_B &\leq R_G. \nonumber
\end{align}
The extremal points are given by the rate pairs $(R_A,R_B)=(\log|A|+S(B)-S(AB),\log|B|-S(B))$
and $(R_A,R_B)=(\log|A|-S(A),\log|B|+S(A)-S(AB))$. The first point is simply achieved by Alice extracting randomness from her
local purity part at rate $\log|A|-S(A)$ and from the information part $I(A:B)$ by Theorem \ref{two-decouple}, and Bob 
just extracting randomness from the local purity part at rate $\log|B|-S(B)$. Exchanging the role gives the second point.   
The rate region is achieved by time-sharing. Time-sharing is a technique used in the asymptotic iid setting in information theory. By sharing time in a sense that for $np$ out of $n$ runs, protocol 1 is performed, and for $n(1-p)$ out of $n$, protocol 2 is performed, if the rates $R_1$ and $R_2$ are achievable in protocol 1 and 2 respectively, then the rate $R=pR_1+(1- p)R_2$ with $0<p<1$ is also achievable \cite[p.534]{Cover-Thomas}.

\emph{Case 2. $S(A|E)< 0$ and $S(B|E)\geq 0$}: By the duality relation, the claimed region is 
\begin{align}
  R_A       &\leq \log|A| - S(A|B), \nonumber\\
  R_B       &\leq \log|B|, \\
  R_A + R_B &\leq R_G. \nonumber
\end{align}
One of the extremal points is $(\log|A| - S(A|B), \log|B|-S(B))$ by the same 
protocol as above. The other extremal point is $(\log|A|-S(AB),\log|B|)$, which
is achieved as follows:
Alice distills local purity from all copies, then distills randomness from $pn$ copies 
of the informational part by Theorem \ref{two-decouple}. This produces 
$R_A=\log|A|-S(A)+pI(A:B)$ of randomness on Alice's side, and at the same time creates local noise $\psi^{\otimes pn}_{(A'E):B}$ for Bob; Then Bob extracts randomness from the state $\psi^{\otimes (1-p)n}_{ABE}\otimes \psi^{\otimes pn}_{(A'E):B}=:\psi'$. The randomness comes from local purity for all copies that contribute the rate $\log|B|-S(B)$, and the informational part for $(1-p)n$ copies under the assistance of local noise $\psi^{\otimes pn}_{(A'E):B}$. By Theorem \ref{two-decouple}, if we choose $p$ satisfying the
$I(B:E)_{\psi'}=I(B:A)_{\psi'}$, i.e. $(1-p)S(B|E)-pS(B)=0$ then it will produce $(1-p)I(A:B)$ of randomness from the informational part. The sum of  the two parts gives $\log|B|$ on Bob. Replacing $p$ in $R_A$ then gives $R_A=\log|A|-S(AB)$. That the randomness at both sides is independent and private against Eve comes from Theorem \ref{security}.

\emph{Case 3. $S(A|E)\ge 0$ and $S(B|E)< 0$} is analogous to case 2.

\emph{Case 4. $S(A|E)> 0$ and $S(B|E)> 0$}: The claimed region is
\begin{align}
  R_A       &\leq \log|A|, \nonumber\\
  R_B       &\leq \log|B|, \\
  R_A + R_B &\leq R_G. \nonumber
\end{align}
Its extremal points are $(R_A,R_B)=(\log|A|,\log|B|-S(AB))$
and $(R_A,R_B)=(\log|A|-S(AB),\log|B|)$. 
We suppose Alice and Bob preprocess their states by local compression,
separating the purity parts and the information parts that contain 
the correlation. We only deal with the information part. Neither Alice nor Bob 
can directly distill randomness at the rate $I(A:B)$. We will show the ``ping-pong'' protocol achieves the two extremal points. 
Alice and Bob will take a ``ping-pong'' strategy 
to help each other obtain required virtual local noise between Alice and Eve, and Bob 
and Eve in order to distill randomness at the rate $I(A:B)$. It is an activation process and it works only 
if the process could be amplified to more and more copies. So it 
has to satisfies some constraints. However finally we find that the constraint 
is always satisfied. So there is no constraint at all!   

In detail, the ``ping-pong'' protocol works as follows: suppose that initially Bob puts $x$ of his copies into his shield system (the copies 
consumed will be negligible when we count the rate at the end) and will not use 
them any more. Alice can imagine Bob's $x$ part is in Eve's hands, i.e. $\psi_{A:(BE)}^{\otimes x}$, so she would 
share $xS(A)$ ebits with Eve by her $x$ systems. Then Alice distills randomness on 
$n_1$ fresh copies using $xS(A)=1$ ebits. Now we apply Theorem \ref{two-decouple} to the state $\psi_{ABE}^{\otimes n_1}\otimes \psi_{A:(BE)}^{\otimes x}$. If $n_1$ satisfies $n_1I(A:B)=n_1I(A:E)+2$, i.e. $n_1S(A|E)=1$, then Alice can distill randomness at the rate $I(A:B)$. Notice that the rate would be $I(A:E)$ if no local noise is used, which is less than $I(A:B)$ from the condition $S(A|E)>0$. In this sense we say that   
one ebit can boost randomness distillation of $n_1$ copies at the rate $I(A:B)$. 
Alice's randomness distillation on $n_1$  copies creates new virtual ebits between Bob and Eve, the number of which is 
$n_1S(B)$. Then Bob distills randomness on $n_2$ fresh copies under 
the assistance of $n_1S(B)$ ebits, where $n_2$ satisfies $n_2S(B|E)=n_1S(B)$. 
After Bob's action, Alice would virtually share $n_2S(A)$ ebits with Eve. This
is one round. The process can be amplified if $n_2S(A)>1$, which reads $S(A)S(B)>S(A|E)S(B|E)$. 
This is always satisfied, unless we are in a trivial situation.
By symmetry, if Bob does first, still we have the same constraint 
(i.e.,~no constraint). Now consider the initial step is performed on $xK$ copies, 
where $K$ is large such that Theorem \ref{two-decouple} is applied. Because at any step of 
any round, the extraction randomness rate is the same equal to $I(A:B)$, 
so overall the rate is also $I(A:B)$. In addition with the local purity, the sum 
achieves the global purity. 

Now we compute the rate: denote $r:=\frac{S(A)S(B)}{S(A|E)S(B|E)}>1$. Suppose 
the ``ping-pong'' protocol ends at Alice's side after $L$ rounds with a large 
finite number $L$. Then Alice gets the randomness from $N_A$ copies 
where
\[\begin{split}
  N_A &= n_1+n_3+\cdots+n_{L+1}                   \\
      &= \frac{K}{S(A|E)}(1+r+r^2+\cdots+r^{L+1}) \\
      &= \frac{K}{S(A|E)}\frac{r^{L+2}-1}{r-1},
\end{split}\]
and Bob from
\[\begin{split}
  N_B &= n_2+n_4+\cdots+n_L                               \\
      &= \frac{KS(B)}{S(A|E)S(B|E)}(1+r+r^2+\cdots+r^{L}) \\
      &= \frac{KS(B)}{S(A|E)S(B|E)}\frac{r^{L+1}-1}{r-1},
\end{split}\]
and the total number of copies is $N=N_A+N_B$. The randomness rate depends only 
on the proportion of copies that randomness distiallion is performed on. So 
$\frac{N_B}{N_A}\approx\frac{S(B)}{rS(B|E)}=\frac{S(A|E)}{S(A)}$ and 
$\frac{N_A}{N}=\frac{S(A)}{I(A:B)}$ for large $L$. Thus we get  
$R_A=\frac{N_A}{N}I(A:B)+\log|A|-S(A)=\log|A|$ and $R_B=\log|B|-S(AB)$;
exchanging the role of Alice and Bob we get the rate pair $(\log|A|-S(AB),\log|B|)$. That the randomness generated at both sides is independent and secure against Eve comes from Theorem \ref{security}.

\medskip\noindent
{\bf Proofs for the other settings.}

2) \emph{Free noise but no communication} means that we are effectively applying
the result in setting 1) to states $\rho_{AB}\otimes \frac{\1}{|A'|}\otimes \frac{\1}{|B'|}$ 
with large local dimension of $A'$ and $B'$ instead of $\rho_{AB}$. In this way,
the two conditional entropies $S(AA'|BB')$ and $S(BB'|AA')$ in setting 1) can be made positive while the local dimension increases,
hence the entire rate region becomes
\begin{align}
\label{eq:two-side:noise:0}
  R_A       &\leq \log|A| - S(A|B) , \nonumber\\
  R_B       &\leq \log|B| - S(B|A) , \\
  R_A + R_B &\leq R_G = \log|AB| - S(AB). \nonumber 
\end{align}

3) \emph{Free noise and free communication}: The rate region is 
characterised by 
\begin{equation}
  \label{eq:all-free}
  R_A\leq R_G,\ R_B\leq R_G,\ \text{and}\ R_A+R_B\leq R_G,
\end{equation}
i.e.~the entire global purity can be realised on either side as
randomness (but not necessarily as purity!). To see this consider
an extreme point of the free-noise region, Eq.~(\ref{eq:two-side:noise:0}):
$R_A=\log|A|-S(AB)+S(B)=\log|A|-S(A)+I(A:B)$, $R_B=\log|B|-S(B)$. 
In this case, Bob's entire randomness, and the part $\log|A|-S(A)$ of Alice
are realised as purity obtained by first compressing their respective systems.
But purity can be exchanged freely via the dephasing channel, so 
the rate sum can be concentrated at Alice's side. Similarly for Bob.

4) For \emph{free communication but no noise} we have only an achievable
region from combining setting 1) with the sharing of purity
observed in setting 3): We can certainly achieve all rate pairs with
\begin{align}
  R_A       &\leq \log|A| - \max\{S(B),S(AB)\} , \nonumber\\
  R_B       &\leq \log|B| - \max\{S(A),S(AB)\} , \\
  R_A + R_B &\leq R_G = \log|AB| - S(AB). \nonumber
  \label{eq:two-side:0:comm}
\end{align}

\medskip\noindent
{\bf Proof of tightness.}
We prove the tightness of the rate regions in three of these settings:
scenario 1), scenario 2), and scenario 3),
but leave 4) open.
In all settings, $R_A+R_B \le R_G = \log|AB|-S(AB)$, which clearly cannot
be beaten, even if Alice and Bob can freely cooperate. 
For setting 3): It clearly cannot be improved, because $R_G$ is reached 
for both parties, and nothing can be better. 
Setting 2): Since there is no communication, whatever Alice does, she cannot effect the purity $\log|B|-S(B)$ of Bob's state which can be used to  generate randomness. Hence $R_A \le R_G - (\log|B|-S(B)) = \log|A|-S(A|B)$. Similarly  
for Bob's rate. 
Setting 1): Even with free noise, $R_A \le \log|A|-S(A|B)$, so
this bound is still true without free noise. But also $R_A \le \log|A|$ always because
Alice has only her system $A$ to work on. So $R_A$ is bounded by 
the minimum of the two; likewise for Bob.
\end{proofthm}

\medskip
We arrive at the form of the one-sided irreducible ibit states by which 
we mean that all the randomness is in the key part at Alice's side and no extra randomness 
can be gained from the shield part. So any optimal protocol must reach 
this kind of state, asymptotically. 

{\lemma \label{irreducible-form}
Every irreducible ibit state with randomness only on Alice's side, is of the form
\[
\alpha_{K_AA'B'} = \frac{1}{|K_A|}\sum_{i,j=1}^{|K_A|} \ketbra{i}{j}_{K_A}
                                                 \otimes \frac{1}{|A'||B'|}U_iU_j^{\dagger},
\]
where $U_i$ are unitaries acting on $A'B'$.
}

\medskip
\begin{proof}
Suppose the purification for the irreducible state is $|\phi\>_{K_AA'B'E}$, 
where the randomness part is $K_A$ and $A'B'$ is the shield part. 
From the form of $\alpha_{K_AA'B'}$, we know
\[
  |\phi\>_{K_AA'B'E}=\frac{1}{\sqrt{|K_A|}}\sum_{i=1}^{|K_A|}\ket{i}_{K_A}\otimes U_{i}\ket{\phi_{0}}_{A'B'E}.
\]
Because it is an irreducible state which means the optimal rate of 
randomness that can be extracted under CLODCC is $\log |K_A|$, and so no protocol 
can produce more than $\log |K_A|$ bits of randomness. 
Now consider the Berta-Fawzi-Wehner protocol \cite{BFW},
which is naturally a CLODCC protocol where Bob does nothing. 
It creates $\log|K_A|+\log|A'|+S(AA'|E)_-$ bits which should be not larger 
than $\log |K_A|$, where we use the denotation $[t]_-=\min\{0,t\}$. Then we get $\log|A'|+S(K_AA'|E)_-\le 0$ that implies 
$S(K_AA'|E)=S(B')-S(E)\le 0$. So we have $\log|A'|+S(B')-S(E)\le 0$. 
Notice that $\alpha_{B'}=\frac{1}{|B'|}\1$ (or Bob can do local compression 
to get purity and send the purity to Alice), and 
$S(E)=S(A'B')_{\phi_i}\le S(A')_{\phi_i}+S(B')_{\phi_i}\le \log|A'|+\log|B'|$ for every $\ket{\phi_i}_{A'B'E}=U_{i}\ket{\phi_0}_{A'B'E}$. Then  the equality holds 
if and only if $\phi_{iA'B'}=\frac{1}{|A'|}\1_{A'}\otimes\frac{1}{|B'|}\1_{B'}$ for every $\ket{\phi_i}_{A'B'E}$, which means that $\ket{\phi_i}_{A'B'E}=U_{iE_1E_2}^{\top}\ket{\Phi}_{A'E_1}\otimes \ket{\Phi}_{B'E_2}=U_{iA'B'}\ket{\Phi}_{A'E_1}\otimes \ket{\Phi}_{B'E_2}$, where $U_{iE_1E_2}^{\top}$ is the transpose of $U_{iE_1E_2}$. Tracing out the $E$ part, we get the desired form.
\end{proof}

\medskip
From Lemma \ref{irreducible-form}, it is clear the last step of an optimal protocol 
is VQSM when Eve's system is included. 
The irreducible form shows that entropy of Eve is not less than Bob's entropy. 

\medskip

\begin{proofthm}{\bf of Theorem \ref{randomness-one}.}
\newcommand{\RHS}{{\text{RHS}}}
Denote the quantity at the right-hand side as $\RHS$.

``$R_A\ge \RHS$'': For any CLODCC that transforms $n$ copies of $\rho_{AB}$ to 
a state $\sigma_{A'B'}$ whose purification is denoted as $\ket{\phi}_{A'B'E'}$. 
We notice that in the case of free communication and no noise in the two-side 
randomness scenario, Alice can get randomness at rate 
\[\begin{split}
  \log|A'B'|-&\max\{S(B'),S(E')\} \\
             &= n\log|AB|-\max\{S(B'),S(E')\},
\end{split}\] 
which holds because under CLODCC, Alice and Bob can only exchange
qubits, but the total number of their shared degrees of freedom remains 
constant.
So we have $R_A\ge \sup_n \left[\log|AB|-\max\left\{\frac1n S(B'),\frac1n S(E')\right\}\right]$,
which is the RHS.

``$R_A\le \RHS$'': Given $n$ copies of $\rho_{AB}$, a protocol to extract 
randomness ends in a state $\omega_{KA'B'}$ that is close to the state 
$\alpha_{KA'B'}=\frac1K \sum_{i,j=1}^K\ketbra{i}{j}_{K}\otimes U_i\sigma_{A'B'}U_j^{\dagger}$ 
in the sense that $\|\omega_{KA'B'}-\alpha_{KA'B'}\|_1 \leq \epsilon(n)\to 0$ (as $n\to\infty$). 
Now suppose the purifying state for $\omega_{KA'B'}$ is $\omega_{E'}$. 
From the continuity of entropy, we get 
$$|S(\omega_{KA'B'})-S(\alpha_{KA'B'})|\le \epsilon(n)\log|AB|^n+h(\epsilon(n)),$$
which amounts to
\begin{align*}
  |S(\omega_{E'})-S(\alpha_{E'})| &\leq \epsilon(n)\log|AB|^n+h(\epsilon(n)), \text{ and} \\
  |S(\omega_{B'})-S(\alpha_{B'})| &\leq \epsilon(n)\log|B'|+h(\epsilon(n)). 
\end{align*}
From the structure of $\alpha_{KA'B'}$, we know that 
$S(\alpha_{E'})=S(\sigma_{A'B'})\le\log|A'B'|$. Hence,
\[\begin{split}
  \log|K|&=\log|KA'B'|-\log|A'B'| \\
         &\leq n\log|AB|-S(\alpha_{E'}) \\
         &\leq n\log|AB|-S(\omega_{E'})+\epsilon(n)\log|AB|^n+h(\epsilon(n)).
\end{split}\]
Thus we have
$$\frac1n\log|K| \leq \log|AB|-\frac1n S(\omega_{E'})+\epsilon(n)\log|AB|+h(\epsilon(n)).$$ 
At the same time we have 
\[\begin{split}
  \frac1n\log|K| &\leq \log|AB|-\frac1n\log|B'| \\
                 &\leq \log|AB|-\frac1n S(\alpha_B') \\
                 &\leq \log|AB|-\frac1n S(\omega_{B'})+\epsilon(n)\log|AB|+h(\epsilon(n)).
\end{split}\]
Take the supremum over $n$, and notice that if for a finite $n$, the supremum is 
obtained, then for any $mn$, it is achieved. So we get $R_A=\sup_n \frac1n \log|K| \leq \RHS$.
\end{proofthm}

\medskip
\begin{proofthm}{\bf of Theorem \ref{uppbound-one}.}
First we prove that the optimal randomness that can be extracted from $\rho_{AB}$ on Alice's 
side is upper bounded as
\[
R_A(\rho) \leq \log|AB|-\max\left\{\frac12[E_r^{\infty}(\rho_{AB})+S(AB)], S(AB)\right\}\!,
\]
where $E_r(\rho)=\inf_{\sigma}\tr\rho(\log\rho-\log\sigma)$, $\sigma$ running over separable states,
and $E^{\infty}_{r}(\rho)=\lim_{n\to\infty}\frac1nE_r(\rho^{\otimes n})$
are the relative entropy of entanglement \cite{Er} and its regularisation. Then from $E^{\infty}_{r}(\rho_{AB})\ge \max\{S(A)-S(AB),S(B)-S(AB)\}$ \cite{DevetakWinter-hash}, we get 
the easy upper bound $R_A(\rho)\leq \log|AB|-\frac12\max\{S(A),S(B)\}$. 

It is straightforward to see that the randomness $R_A$ that can be extracted 
on Alice's side is not larger than the randomness extracted by global operation 
which is equal to the global purity. So $R_A(\rho)\leq\log|AB|-S(AB)$. Suppose 
that for $n$ copies of $\rho_{AB}$ whose purification is $\ket{\phi}_{ABE}$, 
with $E$ on Eve's side, we get an approximate irreducible state $\tilde{\alpha}_{KA'B'}$ 
under CLODCC operations, where $K$ is the randomness part and $A'B'$ is the shield 
part, $KA'$ is on Alice's side and $B'$ on Bob's, and its purification part is 
denoted as $E'$. Then,
\begin{align}
\log&|K| \le n\log|AB|-S(E'),\label{BFW-inq}\\
  &=   n\log|AB|-[S(E')-nS(E)]-nS(E),\label{product}\\
  &\le n\log|AB|-[E_r(\rho_{AB}^{\otimes n})-E_r(\tilde{\alpha}_{KA':B'})]-nS(E),\label{entr-incr}\\
  &\le n\log|AB|-E_r(\rho_{AB}^{\otimes n})-nS(E)+S(B'),\label{rel-ent}\\
  &\le n\log|AB|-E_r(\rho_{AB}^{\otimes n})-nS(E)+\log|B'|,\label{entr-dim}\\
  &=   n\log|AB|-E_r(\rho_{AB}^{\otimes n})-nS(E) \nonumber\\
  &\phantom{==========}
        +\log|KA'B'|-\log|KA'|,\\
  &\le 2n\log|AB|-E_r(\rho_{AB}^{\otimes n})-nS(AB)-\log |K|,\label{dim}
\end{align}
where Ineq. (\ref{BFW-inq}) comes from the Berta-Fawzi-Wehner protocol \cite{BFW}, Eq. (\ref{product}) from $nS(E)-nS(E)=0$, Eq. (\ref{entr-incr}) from Lemma \ref{key-ineq}, Ineq. (\ref{rel-ent}) from the fact that $E_r(\rho_{XY})\le S(\rho_Y)$ which follows from $\rho_{XY}=\tr_{X'}(\ketbra{\rho}{\rho}_{X'X:Y})$ where $\ket{\rho}_{X'XY}$ is a purification state for $\rho_{XY}$, and the nonincreasing property of entanglement under partial trace, Ineq. (\ref{entr-dim}) from $S(\rho_X)\le \log|X|$, and Ineq. (\ref{dim}) from that the dimension of $AB$ is invariant.  
So we obtain $\log |K|\le n\log|AB|-\frac{n}{2}[E^{\infty}_{r}(\rho_{AB})+S(\rho_{AB})]$. 
Since it holds for any randomness extraction, we 
get $R_A=\sup_n \frac1n \log |K|\le \log|AB|-\frac12 [E^{\infty}_{r}(\rho_{AB})+S(\rho_{AB})]$.

To prove the tightness for pure states, we first show it is tight for a singlet, that is $R_A=\frac{3}{2}$.
We get this by constructing an explicit protocol. Consider two singlets 
$\ket{\Phi}_{A_1B_1}\otimes \ket{\Phi}_{A_2B_2}$. Bob sends $B_1$ through 
the dephasing channel and the resulting state is 
$\frac{1}{\sqrt{2}}(\ket{000}\!+\!\ket{111})_{A_1B_1E}$ when we include Eve. 
Alice performs a CNOT to get $\ket{\Phi}_{A_1E}\otimes \ket{0}_{B_1}$,
where the pure state $\ket{0}_{B_1}$ can produce $1$ ibit. 
Then Alice performs a Bell measurement on $A_1A_2$ to get another
$2$ ibits, in total $3$ ibits from $2$ copies of the state. 
This is optimal from the upper bound for a singlet. The result can be generalised to a general pure state $\ket{\phi}_{AB}$
by noticing that the local purity $\log|B|-S(B)=\log|B|-S(A)$ at Bob's side can be transmitted freely through the dephasing channel. First both Alice and Bob turn the shared state into purity parts and information parts. Next Bob sends his purity part to Alice and Alice generates randomness from the purity parts. Then Alice and Bob extract randomness to Alice's side from the information part where the spectrum of the local state is nearly uniform by performing the similar protocol for a singlet. The total randomness comes from three parts, from the local purities of Alice and Bob at the rates $\log|A|-S(A)$ and $\log|B|-S(B)$ respectively, and from the information part at the rate of $\frac{3}{2}S(A)$. Noticing $S(A)=S(B)$ and summing up, we get the rate $R_A=\log|AB|-\frac12S(A)_{\phi}$.
\end{proofthm}

\medskip
\begin{proofthm}{\bf of Theorem \ref{randomness-capacity}.}
Suppose for $n$ uses of the channel, the final tripartite state among Alice, Bob, and Eve is
\[
|\psi\>_{A^nB^nE^n}=V^{\otimes n}(|\phi\>_{A^nA'^n}),
\]
where $V: A'\to BE$ is an isometric dilation of the channel ${\cal N}$. 
Then with free noise, from scenario 2) of Theorem \ref{two-side} we know that
for such a state, Bob can extract randomness at rate 
$n\log|B|-S(B^n|A^n)_{\psi_{A^nB^n}}=n\log |B|+[S(A^n)-S(A^nB^n)]_{\psi_{A^nB^n}}$. 
Further, from Theorem \ref{two-decouple}, we know that the randomness value 
which is represented by the dephased system $K_B$ is decoupled from both 
system A of the sender, and E of the eavesdropper, thus is private to each 
of them individually. 

To maximize the quantity amounts to maximizing the second term, which is 
none other than the reverse coherent information, and from \cite{DJKR,reverse-coh} we 
know that the reverse coherent information of a channel is additive. So optimisation over $n$ channel uses is reduced to the optimization over a single channel use. 

For the sake of completeness, we write the proof. We will prove that 
\[\begin{split}
  \max_{|\phi\>_{A_1A_2A'_1A'_2}}[S(\phi_{A_1A_2})&-S({\cal N}_1\otimes{\cal N}_2(\phi_{A_1A_2A'_1A'_2}))] \\
     &=  \max_{|\psi\>_{A_1A'_1}}[S(\psi_{A_1})-S({\cal N}_1(\psi_{A_1A'_1}))] \\
     &\phantom{=}
         + \max_{|\varphi\>_{A_2A'_2}}[S(\varphi_A)-S({\cal N}_2(\varphi_{A_2A'_2}))],
\end{split}\]
for quantum channels ${\cal N}_1: A'_1\to B_1$ and ${\cal N}_2: A'_2\to B_2$.

It is easy to see the direction ``$\ge$'' holds by taking $|\phi\>_{A_1A_2A'_1A'_2}=|\psi\>_{A_1A'_1}\otimes|\varphi\>_{A_2A'_2}$. 

The nontrivial part is the opposite direction ``$\le$''. First we give another equivalent form of the optimisation. Let $V: A'\to BE$ be an isometric dilation of the channel ${\cal N}: A'\to B$. Then for an input $\ket{\phi}_{AA'}$, we have 
\[
\1_A\otimes V\ket{\phi}_{AA'}=\ket{\psi}_{ABE}.
\]
From the duality relation between conditional entropies of a pure state $\ket{\psi}_{ABE}$ 
\[
  S(A)_{\psi}-S(AB)_{\psi}=-S(B|A)_{\psi}=S(B|E)_{\psi},
\] 
and $\phi_A=\psi_A$, $\psi_{AB}={\cal N}(\phi_{AA'})$, we have
\be
\label{dual}
S(\phi_{A})-S({\cal N}(\phi_{AA'})) =S(B|E)_{\psi},
\ee
where $\psi_{BE}=V\phi_{A'}V^{\dagger}$. In Eq. (\ref{dual}), the optimisation over the input $\ket{\phi}_{AA'}$ on the left side is reduced to the optimisation over the mixed state $\phi_{A'}$ on the right side. So we get the equivalent form of RCI,
\[
\max_{\ket{\phi}_{AA'}}[S(\phi_{A})-S({\cal N}(\phi_{AA'}))]=\max_{\phi_{A'}}S(B|E)_{\psi}.
\]
 
Given two channels ${\cal N}_1$ and ${\cal N}_2$, denote their isometries as $V_1: A'_1\to B_1E_1$ and $V_2: A'_2\to B_2E_2$ respectively. Suppose the optimal input state for the two channels is $\sigma_{A'_1A'_2}$ and the output on $B_1E_1B_2E_2$ is
\[
\rho_{B_1E_1B_2E_2}=V_1\otimes V_2\sigma_{A'_1A'_2}V^{\dagger}_1\otimes V^{\dagger}_2.
\] 

By the chain rule $S(XY|Z)=S(X|Z)+S(Y|XZ)$ and the monotonicity of 
conditional entropy (coming from the strong subadditivity) $S(X|YZ) \leq S(X|Y)$, we get
\begin{align*}
  S(B_1B_2|E_1E_2)_{\rho}&=   S(B_1|E_1E_2)_{\rho}+S(B_2|B_1E_1E_2)_{\rho}, \\
                  &\le S(B_1|E_1)_{\rho}+S(B_2|E_2)_{\rho},
\end{align*}
where $\rho_{B_1E_1}=V_1\sigma_{A'_1}V^{\dagger}_1$, $\rho_{B_2E_2}=V_2\sigma_{A'_2}V^{\dagger}_2$. So the direction ``$\le$''  is true and we get the result of additivity.  

\medskip

Next we show that the function $S(B|E)$ on $\rho_{BE}=V\sigma_{A'}V^{\dagger}$ is concave with respect to the input state $\sigma_{A'}$, thus its optimisation problem is efficiently computable. We will prove that for the input state $\sigma=p\sigma_1+(1-p)\sigma_2$ with $0\le p\le 1$, and output
\[
\rho_{BE}=p\rho_1+(1-p)\rho_2=pV\sigma_1V^{\dagger}+(1-p)V\sigma_2V^{\dagger},
\]
we have 
\be\label{concave}
S(B|E)_{\rho}\ge pS(B|E)_{\rho_1}+(1-p)S(B|E)_{\rho_2}.
\ee

To that end, consider the state 
\[
  \rho_{BEF} = p\rho_1\otimes\ketbra{0}{0}_{F}+(1-p)\rho_{2}\otimes\ketbra{1}{1}_{F}.
\] 
By the monotonicity of the conditional entropy $S(B|E)\ge S(B|EF)$ and the computation of  $S(B|E)$ and $S(B|EF)$, we arrive at Ineq. (\ref{concave}).
\end{proofthm}


\begin{thebibliography}{99}
\vspace{1mm}

\bibitem{BFW}
M. Berta, O. Fawzi, and S. Wehner, 
\emph{IEEE Trans. Inf. Theory} {\bf 60}, 1168 (2014).

\bibitem{merging}
M. Horodecki, J. Oppenheim, and A. Winter, 
\emph{Nature} {\bf 436}, 673 (2005); 
\emph{Commun. Math. Phys.} {\bf 269}, 107 (2007).

\bibitem{decoupling}
F. Dupuis, PhD thesis, arXiv[quant-ph]:1004.1641.

\bibitem{decouple}
F. Dupuis, M. Berta, J. Wullschleger, and R. Renner, 
\emph{Commun. Math. Phys.} {\bf 328}, 251 (2014).

\bibitem{SW}
D. S. Slepian and J. K. Wolf, 
\emph{IEEE Trans. Inf. Theory} {\bf 19}, 471 (1973).

\bibitem{Bera et al}
M. N. Bera, A. Ac\'in, M. Ku\v{s}, M. Mitchell, and M. Lewenstein,
\emph{Rep. Prog. Phys.} {\bf 80}, 124001 (2017).

\bibitem{SM}
See Supplemental Material for details, which includes 
additional references \cite{Uhlmann,FG,gentle,continuity,Er,Schumacher,Winter-thesis,Ye,Cover-Thomas}.

\bibitem{Uhlmann}
A. Uhlmann,
\emph{Rep. Math. Phys.} {\bf 9}, 273 (1976).

\bibitem{FG}
C. A. Fuchs and J. van de Graaf, 
\emph{IEEE Trans. Inf. Theory} {\bf 45}, 1216 (1999).

\bibitem{gentle}
A. Winter,
\emph{IEEE Trans. Inf. Theory} {\bf 45}, 2481 (1999).

\bibitem{continuity}
R. Alicki and M. Fannes, 
\emph{J. Phys. A: Math. Gen.} {\bf 37},  L55 (2004).


\bibitem{Er}
V. Vedral, M. B. Plenio, M. A. Rippin, and P. L. Knight, 
\emph{Phys. Rev. Lett.} {\bf 78}, 2275 (1997).

\bibitem{Schumacher}
B. W. Schumacher, 
\emph{Phys. Rev. A} {\bf 51}, 2738 (1995).


\bibitem{Winter-thesis}
A. Winter, Ph.D. dissertation, quant-ph/9907077 (1999).

\bibitem{Ye}
M.-Y. Ye, Y.-K. Bai, and Z. D. Wang, \emph{Phys. Rev. A} {\bf 78}, 030302 (2008).

\bibitem{Cover-Thomas}
T. A. Cover and J. A. Thomas, \emph{Elements of Information Theory}, John Wiley and Sons, Inc. (2006). 

\bibitem{pptkey}
K. Horodecki, M. Horodecki, P. Horodecki, and J. Oppenheim, 
\emph{Phys. Rev. Lett.} {\bf 94}, 160502 (2005); 
\emph{IEEE Trans. Inf. Theory} {\bf 55}, 1898 (2009).

\bibitem{Christandl-thesis}
M. Christandl, PhD thesis, arXiv:quant-ph/0604183.

\bibitem{Ben}
M. Ben-Or, M. Horodecki, D. W. Leung, D. Mayers, and J. Oppenheim, 
Theory of Cryptography, TCC 2005, Lecture Notes in Computer Science, vol 3378, pp. 386-406 (Springer Verlag 2005).

\bibitem{RK}
R. Renner and R. K\"onig, 
Theory of Cryptography, TCC 2005, Lecture Notes in Computer Science, vol 3378, pp. 405-427 (Springer Verlag 2005).

\bibitem{OHHH2001}
J. Oppenheim, M. Horodecki, P. Horodecki, and R. Horodecki, 
\emph{Phys. Rev. Lett.} {\bf 89}, 180402 (2002).
   
\bibitem{huge-delta}
M. Horodecki, P. Horodecki, R. Horodecki, J. Oppenheim, A. Sen(De), U. Sen, 
and B. Synak-Radtke, 
\emph{Phys. Rev. A} {\bf 71}, 062307 (2005).

\bibitem{HZ}
M. Hayashi and H. Zhu,  Phys. Rev. A {\bf 97}, 012302 (2018).


\bibitem{JO}
J. Oppenheim, K. Horodecki, M. Horodecki, P. Horodecki, and R. Horodecki, 
\emph{Phys. Rev. A} {\bf 68}, 022307 (2003).




\bibitem{Smith} G. Smith,
arXiv[quant-ph]:1007.2855.

\bibitem{Hastings}
M. B. Hastings,
\emph{Nature Physics} {\bf 5}, 255 (2009).

\bibitem{DiVincenzo-Qnonadditive}
D. P. DiVincenzo, P. W. Shor, and J. A. Smolin, 
\emph{Phys. Rev. A} {\bf 57}, 830 (1998).

\bibitem{Li-Winter}
K. Li, A. Winter, X.B. Zou, and G.C. Guo, 
\emph{Phys. Rev. Lett.} {\bf 103}, 120501 (2009).

\bibitem{Smith-privacy} 
G. Smith and J. A. Smolin, 
\emph{Phys. Rev. Lett.} {\bf 103}, 120503 (2009).

\bibitem{uni-additivity}
A. W. Cross, K. Li, and G. Smith, 
\emph{Phys. Rev. Lett.} {\bf 118}, 040501 (2017).

\bibitem{DJKR}
I. Devetak, M. Junge, C. King, and M. B. Ruskai,
\emph{Comm. Math. Phys.} {\bf 266}, 37 (2006).

\bibitem{Hayashi}
M. Hayashi, \emph{Quantum Information: An Introduction},
Springer (2006).

\bibitem{reverse-coh}
R. Garcia-Patron, S. Pirandola, S. Lloyd, and J. H. Shapiro, 
\emph{Phys. Rev. Lett.} {\bf 102}, 210501 (2009).

\bibitem{DevetakWinter-hash} 
I. Devetak and A. Winter, 
\emph{Proc. R. Soc. Lond. A} {\bf 461}, 207 (2005). 

\bibitem{MPRHorodecki}
M. Horodecki, P. Horodecki, and R. Horodecki,
\emph{Phys. Rev. Lett.} {\bf 85}, 433 (2000).

\bibitem{continuous-1}
S. Pirandola, R. Garcia-Patron, S. L. Braunstein, and S. Lloyd,
\emph{Phys. Rev. Lett.} {\bf 102}, 050503 (2009).

\bibitem{continuous-2}
S. Pirandola, R. Laurenza, C. Ottaviani, and L. Banchi,
\emph{Nature Comm.}  {\bf 8}, 15043 (2017).

\bibitem{SKG}
K. P. Seshadreesan, H. Krovi, and S. Guha, 
\emph{Phys. Rev. A} {\bf 100}, 022315 (2019).

\bibitem{GW}
G. Gour, M. M. Wilde, arXiv:1808.06980 (2018).


\bibitem{KW}
E. Kaur and M. M. Wilde,
\emph{Phys. Rev. A} {\bf 96}, 062318 (2017).

\bibitem{Junge1}
L. Gao, M. Junge, and N. LaRacuente,
\emph{Comm. Math. Phys.} {\bf 364}, 83 (2018).

\bibitem{Junge2}
L. Gao, M. Junge, and N. LaRacuente,
\emph{J. Math. Phys.} {\bf 59}, 122202 (2018).

\bibitem{GES}
K. Goodenough, D. Elkouss, and S. Wehner, 
\emph{New J. Phys.} {\bf 18}, 063005 (2016).

\bibitem{VA}
V. V. Albert et al, 
\emph{Phys. Rev. A} {\bf 97}, 032346 (2018).

\bibitem{Pirandola3}
S. Pirandola, R. Laurenza, and L. Banchi,
\emph{Annals of Physics} {\bf 400}, 289 (2019).


\bibitem{Pirandola5}
C. Ottaviani, R. Laurenza, T. P. W. Cope, G. Spedalieri, S. L. Braunstein, and S. Pirandola,
\emph{Proc. SPIE} 9996, Quantum Information Science and Technology II, 999609 (2016).

\bibitem{hypothesis}
T. Cooney, M. Mosonyi, M. M. Wilde,
\emph{Comm. Math. Phys.} {\bf 344}, 797 (2016).

\bibitem{Holevo}
A. S. Holevo,
\emph{J. Math. Phys.} {\bf 43}, 4326 (2002).

\bibitem{Berta-ms}
M. Berta, Diploma thesis, arXiv[quant-ph]:0912.4495.

\bibitem{elsewhere}
K. Horodecki, O. Sakarya, M. Winczewski, A. Winter, and D. Yang 
(in preparation).

\bibitem{Aberg}
J. \AA{}berg, 
arXiv:quant-ph/0612146.

\bibitem{BCP}
T. Baumgratz, M. Cramer, and M. B. Plenio, 
\emph{Phys. Rev. Lett.} {\bf 113}, 140401 (2014).

\bibitem{WY}
A. Winter and D. Yang, 
\emph{Phys. Rev. Lett.} {\bf 116}, 120404 (2016).

\bibitem{ent-assist}
C. H. Bennett, P. W. Shor, J. A. Smolin, and A. V. Thapliyal, 
\emph{Phys. Rev. Lett.} {\bf 83}, 3081 (1999).

\bibitem{EBC}
P. W. Shor,
\emph{J. Math. Phys.} {\bf 43}, 4334 (2002).

\bibitem{reverse-Shannon}
 C. H. Bennett, I. Devetak, A. W. Harrow, P. W. Shor, and A. Winter,
\emph{IEEE Trans. Inf. Theory} {\bf 60}, 2926 (2014).

\bibitem{WWY}
M. M. Wilde, A. Winter, and D. Yang, 
\emph{Comm. Math. Phys.} {\bf 331}, 593 (2014). 

\end{thebibliography}
\end{document}